\newcolumntype{C}{>{\centering\arraybackslash}X}
\newcolumntype{R}{>{\raggedleft\arraybackslash}X}
\newcommand{\iMML}{\texttt{iMM-Local}}
\newcommand{\iMMC}{\texttt{iMM-CG}}
\newcommand{\iMMP}{\texttt{iMM-Polar}}
\newcommand{\iMMM}{\texttt{iMM-MPO}}
\renewcommand{\Re}{\text{Re}}
\renewcommand{\Im}{\text{Im}}
\newcommand{\tTr}{\text{tTr}}
\renewcommand{\max}{\text{max}}
\renewcommand{\epsilon}{\varepsilon}
\newcommand{\mzhide}[1]{}
\DeclareMathOperator*{\argmax}{argmax} 
\tikzset{midarrow/.style={
    decoration={markings,
        mark= at position 0.625 with {\arrow{#1}} ,
    },
    postaction={decorate}
  }
}
\tikzset{beginarrow/.style={
    decoration={markings,
        mark= at position 0.375 with {\arrow{#1}} ,
    },
    postaction={decorate}
  }
}
\tikzset{endarrow/.style={
    decoration={markings,
        mark= at position 0.875 with {\arrow{#1}} ,
    },
    postaction={decorate}
  }
}
\newtheorem{thm}{Theorem}
\begin{document}

\title{Two Dimensional Isometric Tensor Networks on an Infinite Strip}
\author{Yantao Wu}
\email{yantaow@berkeley.edu}
\affiliation{%
RIKEN iTHEMS, Wako, Saitama 351-0198, Japan
}
\affiliation{%
Department of Physics, University of California, Berkeley, CA 94720, USA
}%

\author{Sajant Anand}
\affiliation{%
Department of Physics, University of California, Berkeley, CA 94720, USA
}%

\author{Sheng-Hsuan Lin}
\affiliation{%
Department of Physics, TFK, Technische Universit{\"a}t M{\"u}nchen,
James-Franck-Stra{\ss}e 1, 85748 Garching, Germany
}%

\author{Frank Pollmann}
\affiliation{%
Department of Physics, TFK, Technische Universit{\"a}t M{\"u}nchen,
James-Franck-Stra{\ss}e 1, 85748 Garching, Germany
}%

\author{Michael P. Zaletel}
\affiliation{%
Department of Physics, University of California, Berkeley, CA 94720, USA
}%
\affiliation{%
Material Science Division, Lawrence Berkeley National Laboratory, Berkeley, CA 94720, USA
}%

\date{\today}

\begin{abstract}
The exact contraction of a generic two-dimensional (2D) tensor network state (TNS) is known to be exponentially hard, making simulation of 2D systems difficult.
The recently introduced class of isometric TNS (isoTNS) represents a subset of TNS that allows for efficient simulation of such systems on finite square lattices . 
The isoTNS ansatz requires the identification of an ``orthogonality column" of tensors, within which one-dimensional matrix product state (MPS) methods can be used for calculation of observables and optimization of tensors.
Here we extend isoTNS to infinitely long strip geometries and introduce an infinite version of the Moses Move algorithm for moving the orthogonality column around the network.
Using this algorithm, we iteratively transform an infinite MPS representation of a 2D quantum state into a strip isoTNS and investigate the entanglement properties of the resulting state.
In addition, we demonstrate that the local observables can be evaluated efficiently.
Finally, we introduce an infinite time-evolving block decimation algorithm (iTEBD\textsuperscript{2}) and use it to approximate the ground state of the 2D transverse field Ising model on lattices of infinite strip geometry.
\end{abstract}

\preprint{RIKEN-iTHEMS-Report-22}
\maketitle

\tableofcontents


\section{Introduction}
\label{sec:intro}

The simulation of strongly interacting systems is a fundamental problem in quantum-many body physics.
Tensor network state (TNS) methods~\cite{cirac2021review} provide a controllable and unbiased approach for approximating the exponential complexity of the wavefunction in a manner suitable for computation.
Matrix product states (MPS) are one-dimensional TNS and have proven very successful in numerically and analytically investigating quantum many-body systems in one-dimension~\cite{fannes1992abundance,schollwock2011density,bridgeman2017hand}.
They have been used to find ground states and excited states, study quantum dynamics, and calculate the response to external probes~\cite{schollwock2011density,paeckel2019time,vanderstraeten2019tangent,cirac2021review}.
The stability and efficiency of many MPS algorithms, such as the density matrix renormalization group (DMRG)~\cite{white1992density,white1993density}, time-evolution block-decimation (TEBD)~\cite{vidal2003efficient,vidal2004efficient}, and time-dependent variational principle (TDVP)~\cite{haegeman2011tdvp,haegeman2016tdvp}, rely on the fact that any MPS has an exact isometric form~\footnote{The isometric form is the first of two conditions in the better known canonical form. An injective MPS is in canonical form if (i, isometry) $\sum_{i,\sigma} A^{\sigma}{i,j} \Bar{A}^{\sigma}_{i,k} = \delta_{j,k}$ and (ii, diagonal density matrix) $\sum_{j,j',\sigma} A^{\sigma}{i,j} \Bar{A}^{\sigma}_{i',j'} \Lambda_{j,j'} = \Lambda'_{i,i'}$, where $\Lambda, \Lambda'$ are both positive, diagonal matrices.}\cite{perez2006matrix}.
Additionally this isometric form enables efficient evaluation of expectation values, regardless of the size of the system.

Tensor network states~\cite{richter1995construction,niggemann1997quantum,sierra1998density,nishino1998density}, a.k.a. projected entangled pair states (PEPS)~\cite{verstraete2004renormalization}, are a natural generalization of MPS to lattices in two-dimensions and have been studied extensively in the last decade~\cite{cirac2021review}.
While algorithms have been introduced to determine ground states and calculate expectation values for both finite~\cite{verstraete2004renormalization,pivzorn2010fermionic,lubasch2014algorithms,liu2017gradient,liu2021accurate,vieijra2021direct} and infinite~\cite{phien2015infinite,vanderstraeten2016iPEPS,corboz2016iPEPS,liao2019differentiable} geometries, such methods inherently rely on costly approximations of the exact contraction~\cite{schuch2007computational,scarpa2020limitations,haferkamp2020hard,lubasch2014algorithms,orus2009simulation,fishman2017faster,evenbly2015TNR}. 
Motivated by the benefits of the isometric form in MPS algorithms, isometric tensor network states (isoTNS) were introduced in higher dimensions~\cite{zaletel2020isometric,haghshenas2019conversion,hyatt2019dmrg,tepaske20213D}.
These networks impose isometric constraints on the tensors, generalizing the orthogonality center of an MPS to an orthogonality hypersurface.
Such isometric TNS trade representational power for more efficient and principled optimization.

As it is not known whether or how well a generic TNS can be approximated by an isoTNS, one needs to both study the classes of physical states that can be efficiently represented by isoTNS and develop efficient numerical algorithms for isoTNS.
It was shown that 2D isoTNS on a honeycomb lattice can exactly represent the fixed points of string-net liquids and states reached by finite depth circuit perturbations~\cite{soejima2020isometric}. 
Manipulating such networks on a square lattice relies on the \textit{Moses Move} (MM)~\footnote{The name comes from the biblical story of Moses splitting the Red Sea.} algorithm for approximately splitting a two-sided MPS $\ket{\Psi}$ into the combination of an isometric tensor network operator (isoTNO) $A$ and a normalized two-sided MPS $\ket{\Phi}$, i.e. $\ket{\Psi} \approx A \ket{\Phi}$. 
This is a 2D generalization of the QR-algorithm~\footnote{Strictly speaking, it is not a QR decomposition as we do not require $\ket{\Phi}$ to be upper-triangular.} and was introduced in~\cite{zaletel2020isometric}.
The MM algorithm allows for efficient evaluation of observables without the need for boundary contraction approximations, as evaluating expectation values of operators contained within the orthogonality hypersurface reduces to an MPS problem.
Meanwhile, optimization of finite isoTNS for the ground states of 2D models has been demonstrated both with 2D generalizations of imaginary TEBD and DMRG, denoted TEBD\textsuperscript{2} and DMRG\textsuperscript{2}, respectively~\cite{zaletel2020isometric,lin2022efficient}.
These methods benefit from the isometric conditions placed on the network; the orthogonality center is moved to the tensor being updated, thus yielding well-conditioned optimization in an orthonormal basis.

MPS are often used to study quasi-2D systems with infinite lengths and finite widths, such as an infinite strip (cylinder) when open (periodic) boundary conditions are used for the finite dimension.
Since an MPS is inherently a 1D ansatz targeting 1D area-law states, the complexity of such quasi-2D MPS simulation grows exponentially with the width~\cite{iregui2017cylinders, motruk2016xk}, necessitating the extrapolation of results from strips of modest width.
To avoid this exponential scaling yet retain the benefits of the isometric form to study quasi-2D systems, we are motivated to generalize the finite isoTNS and the MM algorithm on a square lattice to infinite strip geometries.

The paper is structured as follows.
We consider the generalization of the finite isoTNS on a square lattice to infinite strip geometries in Sec.~\ref{sec:isoTNS} and extend the MM algorithm to the thermodynamic limit in Sec.~\ref{sec:iMM}.
This infinite Moses Move (iMM) algorithm splits a two-sided infinite MPS (iMPS) into an infinite isoTNO and a normalized iMPS, which are all translationally invariant.
In Sec.~\ref{sec:iMML}-Sec.~\ref{sec:iMMC}, we propose and compare four different methods for solving the splitting problem:  (i) repeated application of the finite, local MM algorithm; (ii-iii) two iterative update methods optimizing over different objective functions; and (iv) a conjugate gradient optimization maximizing the overlap.
In Sec.~\ref{sec:error}, we analyze the errors of this splitting procedure and introduce a structure theorem which clarifies the assignment of bond dimensions to the isoTNO and iMPS produced by iMM.
We then benchmark the various iMM algorithms in Sec.~\ref{sec:benchmarks}. 
As the first application of iMM algorithm, in Sec.~\ref{sec:iDMRG_to_isoTNS}, we iteratively transform an iMPS found by iDMRG into a strip isoTNS by repeated applications of the iMM algorithm to peel off columns. 
In Sec.~\ref{sec:evaluation_energy}, we then show that one can evaluate the expectation value of physical observables, e.g., energy, efficiently by utilizing iMM algorithm.
We compare the iMM approach to different methods for evaluating expectation value, including boundary MPO contractions methods. 
As a highlight of applications using iMM, we develop a TEBD\textsuperscript{2} algorithm and investigate its performance via imaginary time evolution for finding the ground state (GS) of the 2D transverse field Ising model in Sec.~\ref{sec:iTEBD2}.
We end with the discussion and outlook in Sec.~\ref{sec:conclusion}.

\section{Isometric TNS and Moses Move}
In this section we briefly review the finite isoTNS before discussing the generalization of the isoTNS to infinite strip geometries. A more detailed review of isometric tensor networks in both one- and two-dimensions as well as technical details of the finite MM algorithm can be found in~\cite{lin2022efficient}.

\label{sec:isoTNS}

\subsection{Finite isoTNS}
\label{sec:isoTNS_finite}

On a 2D square lattice, an isoTNS can be represented diagrammatically as  
\begin{equation}
    \label{eq:iso_finite}
    \ket{\Psi} = \begin{tikzpicture}[baseline = (X.base),every node/.style={scale=1.0},scale=1.0]
\newcommand{\LL}{3}      
\newcommand{\ic}{1}
\newcommand{\jc}{1}
\renewcommand{\d}{1.0}   
\renewcommand{\r}{0.1} 
\renewcommand{\a}{0.8}   
\newcommand{\x}{0}
\newcommand{\y}{0}
\fill [red!20] (0.75, -0.5) rectangle ++(0.5,4);
\fill [red!20] (-0.5, 0.75) rectangle ++(4,0.5);
\draw (\LL*\d/2,\LL*\d/2) node (X) {};
\foreach \i in {0,...,\LL}
{
  \foreach \j in {0,...,\LL}
  {
  \renewcommand{\x}{\i*\d}
  \renewcommand{\y}{\j*\d}
  \pgfmathsetmacro{\color}{ifthenelse(\i==\ic && \j==\jc,"red", ifthenelse(\i==\ic || \j==\jc, "blue", "black"))}
  \pgfmathsetmacro{\Hdir}{ifthenelse(\i <\ic,"latex", "latex reversed")}
  \pgfmathsetmacro{\Vdir}{ifthenelse(\j <\jc, "latex", "latex reversed")}
  \draw [fill=\color] (\x,\y) circle (\r);
  \ifthenelse{\i<\LL}{\draw [midarrow={\Hdir}](\x+\r,\y) -- (\x+\r+\a,\y);}{}
  \ifthenelse{\j<\LL}{\draw [midarrow={\Vdir}](\x,\y+\r) -- (\x,\y+\r+\a);}{} 
  
  \draw [midarrow={latex reversed}] (\x+\r*0.707,\y+\r*0.707) -- (\x+\d/3,\y+\d/3);
  }
}
\end{tikzpicture},
\end{equation}
where each tensor has five indices, one physical index and four virtual indices. 
Each tensor is an isometric matrix when its legs with incoming and outgoing arrows are respectively grouped as the row and column indices of the matrix. 
Hence when we contract a matrix with its complex conjugate over the incoming legs, the result is an identity operator. 
Note that the physical indices always carry incoming arrows.
The dimension of the physical leg coming out of the page is the local Hilbert space dimension $d$.
The bond dimension $\chi$ of the virtual indices between sites in the lattice controls the variational power and computational cost of the networks and algorithms, respectively.

The tensor with only incoming indices, e.g. the red tensor in Eq.~\eqref{eq:iso_finite}, is the orthogonality center (OC) of the isoTNS.
The horizontal and vertical columns of tensors with only incoming indices from other parts of the isoTNS, i.e., the red and the blue tensors in Eq.~\eqref{eq:iso_finite} with light red background, is the orthogonality hypersurface.
For an 1D MPS in isometric form, the OC is a 0D wave function representing the system expressed in an orthonormal basis, and the entire MPS is the orthogonality hypersurface. 
In a 2D isoTNS, the orthogonality hypersurface is a 1D wavefunction representing the full 2D state also in an orthonormal basis.
We will only work with the isoTNS with one OC~\footnote{It is possible to have more than one OCs in the TNS. In that case, the reduced density matrices of the OCs will be separable}.
Given this isometric form, expectation values of operators contained entirely within the orthogonality hypersurface are easy to evaluate, as this reduces to the standard MPS problem.
For more complicated operators, we must contract all tensors that can be reached by outgoing arrows, and thus this is in principle as difficult as generic TNS contractions.

For an MPS, we can exactly move the OC to any position by the QR algorithm, enabling efficient evaluation of local operators.
For an isoTNS, moving the orthogonality hypersurface to a neighboring column while ensuring consistent isometric arrows throughout the network in general cannot be performed exactly. 
The finite MM algorithm approximately accomplishes this task by repeatedly applying the following local tripartite decomposition at each site of the column:
\begin{equation}
    \label{eq:tripartite}
    \newcommand{\x}{0}
\newcommand{\y}{0.1}
\renewcommand{\d}{1.0}   
\renewcommand{\r}{0.1}   
\renewcommand{\a}{0.8}   
\begin{tikzpicture}[baseline = (X.base),every node/.style={scale=1.0},scale=1.0]
\draw (0,0) node (X) {};
  \draw [fill=red] (\x,\y) circle (\r);
  \draw [midarrow={latex reversed}](\x+\r,\y) -- (\x+\r+\a,\y); 
  \draw [midarrow={latex}](\x-\r-\a,\y) -- (\x-\r,\y); 
  \draw [midarrow={latex reversed}](\x,\y+\r) -- (\x,\y+\r+\a);
  \draw [midarrow={latex}](\x-0.5*\a,\y-0.866*\a) to[out=90,in=210] (\x-0.866*\r,\y-0.5*\r);
  \draw [midarrow={latex}](\x+0.5*\a,\y-0.866*\a) to[out=90,in=330] (\x+0.866*\r,\y-0.5*\r);
\end{tikzpicture}
\approx
\begin{tikzpicture}[baseline = (X.base),every node/.style={scale=1.0},scale=1.0]
  \newcommand{\dH}{0.8}
  \renewcommand{\a}{0.6}
  \draw (0,-0.25+0.1) node{$j$};
  \draw (-\dH/4*1.7-0.1,\dH/2*1.1+0.1) node{$k$};
  \draw (\dH/4*1.7+0.1,\dH/2*1.1+0.1) node{$i$};
  \renewcommand{\x}{-\dH/2}
  \renewcommand{\y}{0.1}
  \draw [fill=black] (\x,\y) circle (\r);
  \draw [midarrow={latex reversed}] (\x-\r,\y) -- (\x-\r-\a,\y);
  \draw [midarrow={latex}] (\x+\r,\y) -- (\x+\r+\a,\y);
  \draw [midarrow={latex}] (\x+\r*0.5,\y+\r*0.866) -- (0-\r*0.5,\dH*0.866-\r*0.866+\y);
  \draw [midarrow={latex reversed}] (\x,\y-\r) -- (\x,-\dH*0.866+\y);
  \renewcommand{\x}{\dH/2}
  \renewcommand{\y}{0.1}
  \draw [fill=blue] (\x,\y) circle (\r);
  \draw [midarrow={latex reversed}] (\x+\r,\y) -- (\x+\r+\a,\y);
  \draw [midarrow={latex}] (\x-\r*0.5,\y+\r*0.866) -- (0+\r*0.5,\dH*0.866-\r*0.866+\y);
  \draw [midarrow={latex reversed}] (\x,\y-\r) -- (\x,-\dH*0.866+\y);
  \renewcommand{\x}{0}
  \renewcommand{\y}{\dH*0.866+0.1}
  \draw [fill=red] (\x,\y) circle (\r);
  \draw [midarrow={latex reversed}] (\x,\y+\r) -- (\x,\y+\r+\a);
\end{tikzpicture}. 
\end{equation}
Due to the isometric form, this local decomposition is done in an orthonormal basis and is in general a constrained optimization problem.
For details of the tri-splitting algorithm in Eq.~\eqref{eq:tripartite}, see \cite{zaletel2020isometric,lin2022efficient}.

The MM algorithm performs the following decomposition of the orthogonality column as a two-sided MPS:
\begin{equation}
    \label{eq:finite_MM}
    \newcommand{\LL}{3}      
\renewcommand{\d}{1.0}   
\renewcommand{\r}{0.1}   
\renewcommand{\a}{0.8}   
\newcommand{\dH}{0.5}   
\newcommand{\aH}{0.3}   
\begin{tikzpicture}[baseline = (X.base),every node/.style={scale=1.0},scale=1.0]
\newcommand{\x}{0}
\newcommand{\y}{0}
\draw (\LL*\d/2,\LL*\d/2) node (X) {};
\foreach \j in {0,...,\LL}
{
  \renewcommand{\x}{0}
  \renewcommand{\y}{\j*\d}
  \pgfmathsetmacro{\color}{ifthenelse(\j==0, "red", "blue")}
  \draw [fill=\color] (\x,\y) circle (\r);
  \draw [midarrow={latex reversed}](\x+\r,\y) -- (\x+\r+\aH,\y); 
  \draw [midarrow={latex}](\x-\r-\aH,\y) -- (\x-\r,\y); 
  \ifthenelse{\j<\LL}{
  \draw [midarrow={latex reversed}](\x,\y+\r) -- (\x,\y+\r+\a);
  }{}
}
\draw (0,-0.5) node {(i)};
\end{tikzpicture}
\hspace{-12mm}
\approx
\hspace{-3mm}
\begin{tikzpicture}[baseline = (X.base),every node/.style={scale=1.0},scale=1.0]
\newcommand{\x}{0}
\newcommand{\y}{0}
\draw (\LL*\d/2,\LL*\d/2) node (X) {};
\foreach \j in {1,...,\LL}
{
  \renewcommand{\x}{0}
  \renewcommand{\y}{\j*\d}
  \pgfmathsetmacro{\color}{ifthenelse(\j==0, "red", "blue")}
  \draw [fill=\color] (\x,\y) circle (\r);
  \draw [midarrow={latex reversed}](\x+\r,\y) -- (\x+\r+\aH,\y); 
  \draw [midarrow={latex}](\x-\r-\aH,\y) -- (\x-\r,\y); 
  \ifthenelse{\j<\LL}{
  \draw [midarrow={latex reversed}](\x,\y+\r) -- (\x,\y+\r+\a);
  }{}<++>
}
\renewcommand{\x}{-\dH/2}
\renewcommand{\y}{0}
\draw [midarrow={latex reversed}](\x-\r,\y) -- (\x-\r-\aH,\y); 
\draw [fill=black] (\x,\y) circle (\r);
\draw [midarrow={latex}](\x+\r,\y) -- (\x+\r+\aH,\y); 
\renewcommand{\x}{\dH/2}
\draw [fill=blue] (\x,\y) circle (\r);
\draw [midarrow={latex reversed}](\x+\r,\y) -- (\x+\r+\aH,\y); 

\renewcommand{\x}{0}
\renewcommand{\y}{\d*0.5}
\draw [fill=red] (\x,\y) circle (\r);
\draw [midarrow={latex reversed}](\x,\y+\r) -- (\x,\y-\r+\d*0.5); 
\renewcommand{\x}{-\dH/2}
\renewcommand{\y}{0}
\draw [midarrow={latex}](\x+\r*0.5,\y+\r*0.866) -- (0-\r*0.5,\y+\d*0.5-\r*0.866); 
\renewcommand{\x}{\dH/2}
\draw [midarrow={latex}](\x-\r*0.5,\y+\r*0.866) -- (0+\r*0.5,\y+\d*0.5-\r*0.866); 
\draw (0,-0.5) node {(ii)};
\end{tikzpicture}
\hspace{-11mm}
=
\hspace{-1mm}
\begin{tikzpicture}[baseline = (X.base),every node/.style={scale=1.0},scale=1.0]
\newcommand{\x}{0}
\newcommand{\y}{0}
\draw (\LL*\d/2,\LL*\d/2) node (X) {};
\foreach \j in {1,...,\LL}
{
  \renewcommand{\x}{0}
  \renewcommand{\y}{\j*\d}
  \pgfmathsetmacro{\color}{ifthenelse(\j==1, "red", "blue")}
  \draw [fill=\color] (\x,\y) circle (\r);
  \draw [midarrow={latex reversed}](\x+\r,\y) -- (\x+\r+\aH,\y); 
  \draw [midarrow={latex}](\x-\r-\aH,\y) -- (\x-\r,\y); 
  \ifthenelse{\j<\LL}{
  \draw [midarrow={latex reversed}](\x,\y+\r) -- (\x,\y+\r+\a);
  }{}<++>
}
\renewcommand{\x}{-\dH/2}
\renewcommand{\y}{0}
\draw [midarrow={latex reversed}](\x-\r,\y) -- (\x-\r-\aH,\y); 
\draw [fill=black] (\x,\y) circle (\r);
\draw [midarrow={latex}](\x+\r,\y) -- (\x+\r+\aH,\y); 
\renewcommand{\x}{\dH/2}
\draw [fill=blue] (\x,\y) circle (\r);
\draw [midarrow={latex reversed}](\x+\r,\y) -- (\x+\r+\aH,\y); 

\renewcommand{\x}{0}
\renewcommand{\y}{\d*0.5}
\renewcommand{\x}{-\dH/2}
\renewcommand{\y}{0}
\draw [midarrow={latex}](\x+\r*0.24,\y+\r*0.866) -- (0-\r*0.24,\y+\d-\r*0.866); 
\renewcommand{\x}{\dH/2}
\draw [midarrow={latex}](\x-\r*0.24,\y+\r*0.866) -- (0+\r*0.24,\y+\d-\r*0.866); 
  \draw (0,-0.5) node {(iii)};
\end{tikzpicture}
\hspace{-11mm}
\approx
\hspace{-1mm}
\begin{tikzpicture}[baseline = (X.base),every node/.style={scale=1.0},scale=1.0]
\newcommand{\x}{0}
\newcommand{\y}{0}
\draw (\LL*\d/2,\LL*\d/2) node (X) {};
\foreach \j in {2,...,\LL}
{
  \renewcommand{\x}{0}
  \renewcommand{\y}{\j*\d}
  \pgfmathsetmacro{\color}{ifthenelse(\j==2, "red", "blue")}
  \draw [fill=\color] (\x,\y) circle (\r);
  \draw [midarrow={latex reversed}](\x+\r,\y) -- (\x+\r+\aH,\y); 
  \draw [midarrow={latex}](\x-\r-\aH,\y) -- (\x-\r,\y); 
  \ifthenelse{\j<\LL}{
  \draw [midarrow={latex reversed}](\x,\y+\r) -- (\x,\y+\r+\a);
  }{}
}
\foreach \j in {0,...,1}
{
  \renewcommand{\x}{-\dH/2}
  \renewcommand{\y}{\j*\d}
  \draw [midarrow={latex reversed}](\x-\r,\y) -- (\x-\r-\aH,\y); 
  \draw [fill=black] (\x,\y) circle (\r);
  \draw [midarrow={latex}](\x+\r,\y) -- (\x+\r+\aH,\y); 
  \renewcommand{\x}{\dH/2}
  \draw [fill=blue] (\x,\y) circle (\r);
  \draw [midarrow={latex reversed}](\x+\r,\y) -- (\x+\r+\aH,\y); 
  \ifthenelse{\j<1}{
  \foreach \i in {0,...,1}
  {
  \renewcommand{\x}{-\dH/2+\i*\dH}
  \draw [midarrow={latex}](\x,\y+\r) -- (\x,\y+\r+\a);
  }
  }{}
}
\renewcommand{\x}{-\dH/2}
\renewcommand{\y}{\d}
\draw [midarrow={latex}](\x+\r*0.24,\y+\r*0.866) -- (0-\r*0.24,\y+\d-\r*0.866); 
\renewcommand{\x}{\dH/2}
\draw [midarrow={latex}](\x-\r*0.24,\y+\r*0.866) -- (0+\r*0.24,\y+\d-\r*0.866); 
  \draw (0,-0.5) node {(iv)};
\end{tikzpicture}
\hspace{-13mm}
\approx
...
\approx
\hspace{-3mm}
\begin{tikzpicture}[baseline = (X.base),every node/.style={scale=1.0},scale=1.0]
\draw (\LL*\d/2,\LL*\d/2) node (X) {};
\newcommand{\x}{0}
\newcommand{\y}{0}
\foreach \i in {0,...,1}
{
  \foreach \j in {0,...,\LL}
  {
    \renewcommand{\x}{\i*\dH}
    \renewcommand{\y}{\j*\d}
    \pgfmathsetmacro{\Hdir}{ifthenelse(\i == 0, "latex", "latex reversed")}
    \pgfmathsetmacro{\Vdir}{ifthenelse(\j <\LL,"latex", "latex reversed")}
    \pgfmathsetmacro{\color}{ifthenelse(\i==0, ifthenelse(\j==\LL, "blue", "black"), ifthenelse(\j<\LL, "blue", "red"))}
    \draw [fill=\color] (\x,\y) circle (\r);
    \draw [midarrow={\Hdir}](\x+\r,\y) -- (\x+\r+\aH,\y); 
    \ifthenelse{\j<\LL}{
    \draw [midarrow={\Vdir}](\x,\y+\r) -- (\x,\y+\r+\a);
    }{}
  }
}
\foreach \j in {0,...,\LL}
{
  \renewcommand{\x}{0*\d}
  \renewcommand{\y}{\j*\d}
  \draw [midarrow={latex reversed}](\x-\r,\y) -- (\x-\r-\aH,\y); 
}
  \draw (\dH/2,-0.5) node {(v)};
\end{tikzpicture},
    \hspace{-1cm}
\end{equation}
where in the 2D isoTNS, the physical sites are grouped either with the left or the right virtual index.
The MM algorithm effectively ``unzips" the original MPS into an isoTNO and a new MPS.
The starting point of the MM algorithm is a orthogonality column with all vertical arrows pointing down, all horizontal arrows pointing in, and the bottom-most site being the OC.

The MM algorithm has a complexity of $\mathcal{O}(\chi^7)$, when all virtual bond dimensions in the isoTNS are $\chi$ and physical dimension $d = \mathcal{O}(1)$ is ignored.
To optimize finite isoTNS, originally a two-dimensional imaginary time evolution algorithm TEBD\textsuperscript{2} was proposed~\cite{zaletel2020isometric}. 
TEBD\textsuperscript{2} on isoTNS is equivalent to full update in unconstrained TNS, as for isoTNS, the environment of the orthogonality center (i.e., norm matrix)~\cite{lubasch2014algorithms} is always an identity and doesn't require any approximated boundary contraction.
Thus the cost of the full-update in TEBD\textsuperscript{2} calculations is reduced from $\mathcal{O}(\chi^{10})$ for unconstrained TNS to $\mathcal{O}(\chi^{7})$ for isoTNS, at the expense of errors due to the MM.

Recently, DMRG\textsuperscript{2}, a 2D generalization of DMRG, was introduced in which effective Hamiltonian environments for each tensor are formed from local terms in the Hamiltonian~\cite{lin2022efficient}.
DMRG\textsuperscript{2} for either isoTNS or unconstrained TNS have the similar complexity~\footnote{The complexity is $\mathcal{O}(\chi^{12})$ for DMRG\textsuperscript{2} with explicit $H_\text{eff}$ construction and $\mathcal{O}(\chi^{10})$ utilizing the sparse structure.}, but the optimization problem for local tensors reduces from the generalized eigenvalue problem for unconstrained TNS to the regular eigenvalue problem for isoTNS.
This is because optimization is done in an orthonormal basis, and thus the norm matrix is an identity operator~\cite{schollwock2011density}.
This method has recently been successfully applied to the Kitaev model on honeycomb lattice~\cite{lin2022efficient}.

\subsection{Infinite strip geometry}
\label{sec:isoTNS_infinite}

An isoTNS on an infinite strip that is uniform along the infinite direction, for a $4\times \infty$ example, can be represented diagrammatically as
\begin{equation}
    \label{eq:iso_strip}
    \ket{\Psi}\: =\:\begin{tikzpicture}[baseline = (X.base),every node/.style={scale=1.0},scale=1.0]
\newcommand{\LL}{3}      
\newcommand{\ic}{1}
\newcommand{\jc}{10}
\renewcommand{\d}{1.0}   
\renewcommand{\r}{0.1} 
\renewcommand{\a}{0.8}   
\newcommand{\x}{0}
\newcommand{\y}{0}
\fill [red!20] (0.75, -1.5) rectangle ++(0.5,6);
\draw (\LL*\d/2,\LL*\d/2) node (X) {};
\foreach \i in {0,...,\LL}
{
  \foreach \j in {0,...,\LL}
  {
  \renewcommand{\x}{\i*\d}
  \renewcommand{\y}{\j*\d}
  \pgfmathsetmacro{\color}{ifthenelse(\i==\ic && \j==\jc,"red", ifthenelse(\i==\ic || \j==\jc, "blue", "black"))}
  \pgfmathsetmacro{\Hdir}{ifthenelse(\i <\ic,"latex", "latex reversed")}
  \pgfmathsetmacro{\Vdir}{ifthenelse(\j <\jc, "latex", "latex reversed")}
  \draw [fill=\color] (\x,\y) circle (\r);
  \draw [midarrow={\Hdir}](\x+\r,\y) -- (\x+\r+\a,\y); 
  \draw [midarrow={\Vdir}](\x,\y+\r) -- (\x,\y+\r+\a);
  \draw [midarrow={latex reversed}] (\x+\r*0.707,\y+\r*0.707) -- (\x+\d/3,\y+\d/3);
  }
}
\foreach \i in {0,...,\LL}
{
  \renewcommand{\x}{\i*\d}
  \renewcommand{\y}{0*\d}
  \draw [midarrow={latex}](\x,\y-\r-\a) -- (\x,\y-\r); 
  \draw (\x,\y-\r-\a*1.25) node{$\vdots$}; 
  \draw (\x,\y+\d*\LL+\r+\a*1.5) node{$\vdots$}; 
}
\foreach \j in {0,...,\LL}
{
  \renewcommand{\x}{0*\d}
  \renewcommand{\y}{\j*\d}
  \draw [midarrow={latex}](\x-\r-\a,\y) -- (\x-\r,\y); 
}
\end{tikzpicture},
\end{equation}
where the tensors within a column are the same but may be different for different columns. 
Within the 1D orthogonality hypersurface (colored light red), the OC can be placed anywhere within the column using the standard uniform MPS methods~\cite{vanderstraeten2019tangent}.
In the networks we consider, we place the OC at either $y = \pm \infty$ (given the upward pointing arrows, it is placed at positive $\infty$ in Eq.~\eqref{eq:iso_strip}), so that the network has uniform vertical isometry arrows for all columns.
Different to the finite case, here we work with an orthogonality ``column'', instead of the $+$-shape orthogonality hypersurface. 
We note that while we only consider a single-site unit-cell in this work, all algorithms presented can be extended to multi-site unit cells, allowing for periodic inhomogeneity along the infinite direction.

While boundary indices for finite isoTNS are typically trivial, as in Eq.~\eqref{eq:iso_finite}, in general the indices on the boundary can be non-trivial with dimension $D_b$, as shown in the infinite strip network Eq.~\eqref{eq:iso_strip}.
The boundary legs always carry incoming arrows as the OC is contained within the network.
When $D_b=1$, $\ket{\Psi}$ is a pure quantum state. 
When $D_b>1$, indicating non-trivial boundary indices, $\ket{\Psi}$ can be viewed as a purification of the density matrix of the physical sites.
Typically for a bond-dimension $\chi$ strip isoTNS, we take $D_b = \chi$.
As we will show later, having non-trivial boundaries makes optimization easier and ensures that the orthogonality column, viewed as a two-sided MPS, is injective~\cite{cirac2021review}.

\section{Infinite Moses Move}
\label{sec:iMM}
Similar to the case of finite isoTNS, we desire the ability to change the isometric structure and move the orthogonality column of strip isoTNS network.
This would allow for evaluation of correlation functions within the orthogonality column by 1D iMPS methods and also ensures that optimization algorithms are done in an orthonormal basis.
The iMM algorithm solves the problem of splitting a normalized infinite two-sided MPS $\ket{\Psi}$ into an infinite isoTNO $A$ and a normalized infinite two-sided MPS $\ket{\Phi}$: %
\begin{equation}
    \label{eq:iQR}
    \newcommand{\LL}{3}      
\renewcommand{\d}{1.1}   
\renewcommand{\r}{0.25}   
\renewcommand{\a}{0.6}   
\newcommand{\dH}{1.0}   
\newcommand{\aH}{0.5}   
\begin{tikzpicture}[baseline = (X.base),every node/.style={scale=0.8},scale=0.8]
\newcommand{\x}{0}
\newcommand{\y}{0}
\draw (\LL*\d/2,\LL*\d/2) node (X) {};
\foreach \j in {0,...,\LL}
{
  \renewcommand{\x}{0}
  \renewcommand{\y}{\j*\d}
  \draw [color=blue, very thick](\x,\y) circle (\r);
  \draw (\x,\y) node {$\psi$};
  \draw [midarrow={latex reversed}](\x+\r,\y) -- (\x+\r+\aH,\y); 
  \draw [midarrow={latex}](\x-\r-\aH,\y) -- (\x-\r,\y); 
  \draw [midarrow={latex reversed}](\x,\y+\r) -- (\x,\y+\r+\a);
}
\renewcommand{\y}{0}
\draw [midarrow={latex}](\x,\y-\r) -- (\x,\y-\r-\a);
\draw (\x,\LL*\d+\r+\a*1.7) node{$\vdots$}; 
\draw (\x,0-\r-\a*1.25) node{$\vdots$}; 
\draw (\x, -1.7) node{$\ket{\Psi}$};

\draw (1.5, 1.5) node{$\approx$};

\renewcommand{\x}{3}
\renewcommand{\y}{0}
\foreach \i in {0,...,1}
{
  \renewcommand{\x}{3+\i*\dH}
  \foreach \j in {0,...,\LL}
  {
    \renewcommand{\y}{\j*\d}
    \pgfmathsetmacro{\Hdir}{ifthenelse(\i == 0, "latex", "latex reversed")}
    \ifthenelse{\i<1}{\draw [color=black, very thick] (\x,\y) circle (\r);\draw (\x,\y) node{$a$};}
                {\draw [color=blue, very thick] (\x,\y) circle (\r);\draw (\x,\y) node{$\phi$};}
    \ifthenelse{\i<1}{
        \draw [midarrow={\Hdir}](\x+\r,\y) -- (\x+\r+\aH,\y);
    }{
        \draw [midarrow={\Hdir}](\x+\r,\y) -- (\x+\r+\aH,\y);
    }
     
    \draw [midarrow={latex}](\x,\y+\r) -- (\x,\y+\r+\a);
  }
  \draw (\x,\y+\LL*\d+\r+\a*1.7) node{$\vdots$}; 
  \draw (\x,\y-\r-\a*1.25) node{$\vdots$}; 
  \draw [midarrow={latex}](\x,\y-\r-\a) -- (\x,\y-\r);
}
\foreach \j in {0,...,\LL}
{
  \renewcommand{\x}{3+0*\d}
  \renewcommand{\y}{\j*\d}
  \draw [midarrow={latex reversed}](\x-\r,\y) -- (\x-\r-\aH,\y); 
}
\draw (3+0*\dH, -1.7) node{$A$};
\draw (3+1*\dH, -1.7) node{$\ket{\Phi}$};

\end{tikzpicture}.
\end{equation}
Note that we group the physical leg with the right or left legs of the $\psi$ tensor depending on whether we want the physical leg to be on the $a$ tensor or the $\phi$ tensor after splitting.
In cases where $\ket{\psi}$ represents the horizontal contraction of two columns, we distribute one physical leg to each of the left and right legs of $\psi$ so that both $a$ and $\phi$ will have a physical leg.
For all iMM algorithms we present, we require the initial iMPS to have all vertical arrows pointing down, all horizontal arrows incoming, and the OC to be at $-\infty$.
With this, the iMM can again be viewed as unzipping the two-column iMPS into an isoTNO $A$ and a new iMPS $\ket{\Phi}$.

Formulating this as an optimization problem, one finds the optimal isometric tensor $a'$ and normalized MPS tensor $\phi'$ by maximizing the overlap density between $\ket{\Psi}$ and $A\ket{\Phi}$:    
\begin{equation}
    \label{eq:lambda1}
    a', \phi' = \argmax_{a,\phi\in\text{isometry}} \Re \, \lambda_1( T_{\psi: a\phi})
\end{equation} 
where $\Re\ \lambda_1$ denotes the real part of the largest eigenvalue of the mixed transfer matrix $T_{\psi: a\phi}$ between $\ket{\Psi}$ and $A\ket{\Phi}$: 
\begin{equation}
    \label{eq:T}
    T \equiv T_{\psi:a\phi} = \begin{tikzpicture}[baseline = (X.base),every node/.style={scale=0.6},scale=.4]
\draw (0.0,0.0) node (X) {};
\draw (0.0,0.0) circle (0.5);
\draw (0.0,0.0) node {$\overline\psi$};
\draw (-1.5,-0.4) -- (-0.5,-0.15);
\draw (0.5,0.15) -- (1.5,0.4); 
\draw (0.0,-0.5) -- (0.0,-1.5);
\draw (0.0,0.5) -- (0.0,1.5);
\newcommand{\x}{3}
\draw (0.0+\x,0.0) circle (0.5);
\draw (0.0+\x,0.0) node {$a$};
\draw (-1.5+\x,-0.4) -- (-0.5+\x,-0.15);
\draw (0.5+\x,0.15) -- (1.5+\x,0.4); 
\draw (0.0+\x,-.5) -- (0.0+\x,-1.5);
\draw (0.0+\x,0.5) -- (0.0+\x,1.5);
\newcommand{\y}{\x+3}
\newcommand{\z}{0.8}
\draw (0.0+\y,0.0+\z) circle (0.5);
\draw (0.0+\y,0.0+\z) node {$\phi$};
\draw (-1.5+\y,-0.4+\z) -- (-0.5+\y,-0.15+\z);
\draw (0.5+\y,0.15+\z) -- (1.5+\y,0.4+\z); 
\draw (0.0+\y,-.5+\z) -- (0.0+\y,-1.5+\z);
\draw (0.0+\y,0.5+\z) -- (0.0+\y,1.5+\z);
\draw (-1.5,-0.4) edge[in=195,out=195] (-1.5+\x,-0.4); 
\draw (1.5, 0.4) -- (1.5+2,0.4+0.5);
\draw (1.5+2, 0.4+0.5) edge[in=15,out=15] (1.5+\y,0.4+\z);
\end{tikzpicture}.
\end{equation}
As any iMPS can be written exactly in isometric form with the same bond dimension, we directly search for isometric $\phi$ tensors.
In the following, when no confusion arises, we simply use $T$ to denote this mixed transfer matrix. 
Note that maximizing $\Re \, \lambda_1(T)$ or $\abs{\lambda_1(T)}$ is equivalent for our purpose because of the unitary freedom of the isometries.

We now propose and evaluate four algorithms for solving the splitting problem posed in Eq.~\eqref{eq:iQR}:  \iMML, \iMMP, \iMMM, and \iMMC.
We first give a brief overview of these methods. 
\iMML\ is the direct generalization of the finite MM, in which the local tripartite decomposition in Eq.~\eqref{eq:tripartite} is iterated until convergence.
The method is not guaranteed to converge, but we find it quickly provides an approximate  solution.
Thus, we use it as an initialization for the following optimization methods.
The \iMMP\ method minimizes the error of the fixed-point tripartite decomposition using repeated polar decompositions.
It is slower than \iMML\, and does not directly maximize the overlap density $\Re\ \lambda_1$.
However, we find it to be nearly optimal in practice.
Finally, \iMMM\ and \iMMC\ maximize the overlap density $\Re\ \lambda_1$ through two different methods.
The \iMMM\ method finds the $\phi$ tensor by the variational MPO-MPS compression algorithm and the $a$ tensor by the polar decomposition over the linearized overlap.
The method is slightly slower than \iMMP\ and yield comparable results.
The \iMMC\ method is based on the conjugate gradient ascent of the overlap density $\Re\ \lambda_1$.
This method has difficulty reaching a satisfying (local) minimum on its own and is best used to improve the results of other methods. 
It is found that the most efficient strategy is to use \iMML\ to obtain a stable initial guess and then use \iMMP\ to finish the infinite splitting optimization. 
Using \iMMM\ or \iMMC\ in the end is optional, as the improvements they provide after \iMMP\, are very little.
We include details of the four methods for completeness but do not use \iMMM\ and \iMMC\ beyond benchmarking.

\subsection{\iMML}
\label{sec:iMML}

Given the input iMPS $\ket{\Psi}$ made of tensors $\psi$ as shown on the left in Eq.~\eqref{eq:iQR}, one performs the finite MM in Eq.~\eqref{eq:finite_MM} assuming that all the blue tensors above the OC equal $\psi$.
The first, normalized $s_1$ is randomly generated such that the dimensions of the lower two legs match that of the desired $a$ and $\phi$ tensors; $s_1$ plays the role of the initial zero-site wavefunction.
One then uses the tripartite decomposition in Eq.~\eqref{eq:tripartite} to solve the following problem for site $n$ iteratively along the infinite direction: 
\begin{equation}
    \label{eq:renyi_iteration}
    \newcommand{\LL}{1}      
\renewcommand{\d}{1.0}   
\renewcommand{\r}{0.25}   
\renewcommand{\a}{0.5}   
\newcommand{\dH}{\d}   
\newcommand{\aH}{\a}   
\begin{tikzpicture}[baseline = (X.base),every node/.style={scale=0.8},scale=0.8]
\newcommand{\x}{0}
\newcommand{\y}{0}
\draw (0,\d/2) node (X) {};
\renewcommand{\x}{0}
\renewcommand{\y}{\d}
\draw (\x,\y)[color=blue, very thick] circle (\r);
\draw (\x,\y) node{$\psi$};
\draw [midarrow={latex reversed}](\x+\r,\y) -- (\x+\r+\aH,\y); 
\draw [midarrow={latex}](\x-\r-\aH,\y) -- (\x-\r,\y); 
\draw [midarrow={latex reversed}](\x,\y+\r) -- (\x,\y+\r+\aH);
\renewcommand{\x}{0}
\renewcommand{\y}{0}
\draw (\x,\y)[color=red, very thick] circle (\r);
\draw (\x,\y) node{$s_n$};
\draw [midarrow={latex reversed}](\x,\y+\r) -- (\x,\y-\r+\d); 
\renewcommand{\x}{-\dH/2}
\renewcommand{\y}{-\d}
\draw [midarrow={latex}](\x+\r*0.5,\y+\r*0.866) -- (0-\r*0.5,\y+\d-\r*0.866); 
\renewcommand{\x}{\dH/2}
\draw [midarrow={latex}](\x-\r*0.5,\y+\r*0.866) -- (0+\r*0.5,\y+\d-\r*0.866); 
\end{tikzpicture}
\approx
\begin{tikzpicture}[baseline = (X.base),every node/.style={scale=0.8},scale=0.8]
  \draw (0,\d/2) node (X) {};
  \newcommand{\x}{-\dH/2}
  \newcommand{\y}{0}
  \draw (\x,\y)[color=black, very thick] circle (\r);
  \draw (\x,\y) node{$a_n$};
  \draw (\x,\y+0.5) node{$k$};
  \draw [midarrow={latex reversed}] (\x-\r,\y) -- (\x-\r-\a,0);
  \draw [midarrow={latex}] (\x+\r,\y) -- (\x+\r+\a,0);
  \draw [midarrow={latex reversed}] (\x,\y-\r) -- (\x,\y-\r-\a);
  \draw [midarrow={latex}] (\x+\r*0.5,\y+\r*0.866) -- (0-\r*0.5,\dH*0.866-\r*0.866);
  \renewcommand{\x}{\dH/2}
  \renewcommand{\y}{0}
  \draw (\x,\y)[color=blue, very thick] circle (\r);
  \draw (\x,\y) node{$\phi_n$};
   \draw (\x,\y+0.5) node{$i$};
  \draw [midarrow={latex reversed}] (\x+\r,\y) -- (\x+\r+\a,0);
  \draw [midarrow={latex reversed}] (\x,\y-\r) -- (\x,\y-\r-\a);
  \draw [midarrow={latex}] (\x-\r*0.5,\y+\r*0.866) -- (0+\r*0.5,\dH*0.866-\r*0.866);
  \renewcommand{\x}{0}
  \renewcommand{\y}{\dH*0.866}
  \draw (\x,\y)[color=red, very thick] ellipse (0.5 and \r); 
  \draw (\x,\y) node{$s_{n+1}$};
    \draw (\x,\y-1.2) node{$j$};
    \draw (\x+0.2,\y+0.5) node{$l$};
  \draw [midarrow={latex reversed}] (\x,\y+\r) -- (\x,\y+\r+\a);
\end{tikzpicture}. 
\end{equation}
This iteration terminates if $\norm{s_n-s_{n+1}}$ is smaller than certain threshold or $n$ exceeds certain iteration limit. 
One then takes $A$ and $\ket{\Phi}$ to be composed, respectively, of the last $a_n$ and $\phi_n$ in the iteration.

For faster runtime and better convergence, we found it important to minimize R\'enyi-2 entropy in the tripartite splitting here. We suggest the interested reader to checkout the details in previous works~\cite{zaletel2020isometric,lin2022efficient} and describe the additional modification in the following.
The tripartite splitting has unitary gauge redundancies on the internal bonds $i, j$, and $k$. 
To increase the chance that the MM iterations converge along the infinite direction, we need to the fix the gauge redundancies on bond $k$ and $i$. 
We do not need to fix the redundancy on bond $j$, because only $S$ is used for the convergence criteria.
To fix $k$ and $i$, we do an SVD on matrix $S_{k:li} = UsV^\dag$, which is formed by grouping bond $i$ and $l$ as the column index, and replace $S$ with $sV^\dag$. 
We then do an SVD on matrix $S_{lk:i} = UsV^\dag$, formed by grouping bond $k$ and $l$, and replace $S$ with $Us$.
The two SVDs almost fix the gauge redundancy except the gauge freedom in the SVD itself: every column of $U$ from an SVD can be multiplied by a phase, as long as the corresponding row of the $V^\dag$ is multiplied by the inverse of that phase. 
To fix this phase freedom, we view $S_{l=0}$ as a matrix and demand its first column and first row to be all positive numbers. 
This can be achieved by a unitary diagonal matrix on bond $i$ and $k$ independently.
When the singular values of $S_{k:li}$ and $S_{lk:i}$ are not degenerate, these operations fix the gauge freedom on bond $i$ and $k$ completely. 
Despite the gauge-fixing, $\iMML$\, still often fails to converge.
Thus, it is best used to provide an initial guess for the $\iMMP$\, algorithm which is guaranteed to converge along the infinite direction.

\subsection{\iMMP}
\label{sec:iMMP}

To overcome the convergence issue of \iMML, we consider \iMMP\ to directly optimize the approximate fixed point equation of the tripartite decomposition in Eq.~\eqref{eq:renyi_iteration}.
In other words, we enforce translational invariance of the iMM by requiring $s_{n} = s_{n+1}=s$ in Eq.~\eqref{eq:renyi_iteration} and optimize over $s$, $a$, and $\phi$ to make the equality close to exact.
More precisely, we maximize the real part of the overlap between the left and right-hand side of the Eq.~\eqref{eq:renyi_iteration} under the constraints that $\norm{s} = 1$, $a$ is an isometry, and $\phi$ is a normalized MPS tensor in isometric form~\footnote{The reason to take the real part in Eq.~\eqref{eq:iMM_polar} is that for any normalized states $\ket{x}$ and $\ket{y}$, $\norm{\ket{x}-\ket{y}}^2 = 2 (1-\Re \braket{x|y})$.}:
\begin{equation}
    \label{eq:iMM_polar}
    \newcommand{\LL}{1}      
\renewcommand{\d}{1.0}   
\renewcommand{\r}{0.3}   
\renewcommand{\a}{0.4}   
\hspace{-5mm}
F=
\Re 
\hspace{-3mm}
\begin{tikzpicture}[baseline = (X.base),every node/.style={scale=0.6},scale=0.6]
\newcommand{\x}{0}
\newcommand{\y}{0}
\draw (0,\d/2) node (X) {};
\renewcommand{\x}{0}
\renewcommand{\y}{\d}
\draw (\x,\y)[color=blue, very thick] circle (\r);
\draw (\x,\y) node{$\overline\psi$};
\draw (\x+\r,\y) -- (\x+\r+\a,\y); 
\draw (\x-\r-\a,\y) -- (\x-\r,\y); 
\draw (\x,\y+\r) -- (\x,\y+\r+\a);
\renewcommand{\x}{0}
\renewcommand{\y}{0}
\draw (\x,\y)[color=red, very thick] circle (\r);
\draw (\x,\y) node{$\overline s$};
\draw (\x,\y+\r) -- (\x,\y-\r+\d); 
\renewcommand{\y}{-\d}
\draw (-\d/2+\r*0.5,\y+\r*0.866) -- (0-\r*0.5,\y+\d-\r*0.866); 
\draw (\d/2-\r*0.5,\y+\r*0.866) -- (0+\r*0.5,\y+\d-\r*0.866); 
\newcommand{\xx}{3}
\renewcommand{\x}{-\d/2+\xx}
\renewcommand{\y}{0}
\draw (\x,\y)[color=black, very thick] circle (\r);
\draw (\x,\y) node{$a$};
\draw ({\x-\r},\y) -- (\x-\r-\a,0);
\draw (\x+\r,\y) -- (\x+\r+\a,0);
\draw (\x,\y-\r) -- (\x,\y-\r-\a);
\draw (\x+\r*0.5,\y+\r*0.866) -- (\xx-\r*0.5,\y+\d-\r*0.866);
\renewcommand{\x}{\d/2+\xx}
\renewcommand{\y}{0}
\draw (\x,\y)[color=blue, very thick] circle (\r);
\draw (\x,\y) node{$\phi$};
\draw ({\x+\r},\y) -- (\x+\r+\a,0);
\draw (\x,\y-\r) -- (\x,\y-\r-\a);
\draw (\x-\r*0.5,\y+\r*0.866) -- (\xx+\r*0.5,\y+\d-\r*0.866);
\renewcommand{\x}{\xx}
\renewcommand{\y}{\d}
\draw (\x,\y)[color=red, very thick] circle (\r); 
\draw (\x,\y) node{$s$};
\draw (\x,\y+\r) -- (\x,\y+\r+\a);
\draw (0,\y+\r+\a) edge[in=90,out=90] (\x,\y+\r+\a);  
\draw (-\d/2+\r*0.5,-\d+\r*0.866) edge[in=-90,out=-120] (\xx-\d/2,-\r-\a);  
\draw (\d/2-\r*0.5,-\d+\r*0.866) edge[in=-90,out=-70] (\xx+\d/2,-\r-\a);  
\draw (-\r-\a,\d) edge[in=180,out=180] (\xx-\r-\a-\a,0); 
\draw (\r+\a,\d) edge[in=0,out=0] (\xx+\r+\a+\a,0); 
\end{tikzpicture} 
    \hspace{-5mm}
    = \Re \, s^\dag T s.
\hspace{-5mm}
\end{equation}
We note here that the fitness function above is not the same as the one in Eq.~\eqref{eq:lambda1}, in which the $s$ tensor plays no role. 
\iMMP\,is thus not variationally optimal. 
This way of solving Eq.~\eqref{eq:lambda1} approximately is entirely motivated by the finite MM, and as shown later, is very close to being variationally optimal.

Here we describe the steps of \iMMP\, which consists of maximizing Eq.~\eqref{eq:iMM_polar} alternately:
\begin{enumerate}
    \item When $s$ and $a$ are fixed, one forms the environment $E_\phi$ of $\phi$ in $F$ such that $F = \Tr(E_\phi^\dag \phi)$. 
    Here $\phi$ is viewed as a matrix with its incoming and outgoing indices grouped as the row and column index respectively.
    $E_\phi$ is grouped as a matrix such that its row and column index contract with the row and column index of $\phi$ respectively.
    Then the optimal isometry $\phi'$ maximizing $F$ is given by the polar decomposition of $E_\phi$: $\phi' =\argmax_\phi F  = U_\phi$ where $U_\phi P_\phi = E_\phi$ is the polar decomposition of $E_\phi$ \cite{evenbly2009MERA}.  
    
    \item When $s$ and $\phi$ are fixed, one analogously find the optimal update $a'=\argmax_a F$ through the polar decomposition of $E_a$.  
    
    \item When $a$ and $\phi$ are fixed, $F =\Re \, s^\dag T s= s^\dag \left( (T+T^\dag)/2 \right)  s$, and optimal $s'=\argmax_s F$ is given by the leading eigenvector of the Hermitian matrix $(T+T^\dag)/2$. 
\end{enumerate}
We repeat steps 1 to 3 to update $\phi, a, s$ many times until convergence is reached within a threshold.
Note that $F$ is strictly increasing at each step.
Like DMRG, this style of alternate optimization is not convex, and a good initial guess, given by \iMML\,, can speed up the optimization greatly.
Typically such an optimization results in local extremum.

\subsection{\iMMM}
\label{sec:iMMM}

The \iMMM\ method maximizes the overlap density $\Re\ \lambda_1$ through alternatively updating $a$ and $\phi$ tensor until convergence.
We describe the two alternative steps of \iMMM\ as follows:
\begin{enumerate}
    \item Given $a$, which determines an matrix product operator (MPO) $A^\dagger$ acting on the state $\ket{\Psi}$, the optimal update of $\phi$ is determined by the variational MPO-MPS compression algorithm developed for uniform matrix product states~\cite{vanhecke2021tangent}.
    The update for $\phi$ is optimal since the compression algorithm maximizes the overlap density.
    We perform only one update step in the MPO-MPS compression algorithm~\cite{vanhecke2021tangent} instead of finding the converged solution.
    
    \item Given $\phi$, we linearize the overlap  $\Re\ \langle \Psi | A \Phi \rangle$, i.e., viewing each $a$ tensor in $\ket{A}$ as an independent tensor, and find the update for $a$ by the polar decomposition over the environment of $a$ tensor,
    \begin{equation}
        \label{eq:Ea}
        E_{a} = \newcommand{\LL}{1}      
\renewcommand{\d}{1.0}   
\renewcommand{\r}{0.25}   
\renewcommand{\a}{0.5}   
\begin{tikzpicture}[baseline = (X.base),every node/.style={scale=0.6},scale=0.6]
\newcommand{\x}{0}
\newcommand{\y}{0}
\draw (0,\d/2) node (X) {};
\renewcommand{\x}{0}
\renewcommand{\y}{\d}
\draw (\x,\y)[color=blue, very thick] circle (\r);
\draw (\x,\y) node{$\overline\psi$};
\draw (\x+\r,\y) -- (\x+\r+\a,\y); 
\draw (\x-\r-\a,\y) -- (\x-\r,\y); 
\draw (\x,\y+\r) -- (\x,\y+\r+\a);
\renewcommand{\x}{0}
\renewcommand{\y}{0}
\draw (\x,\y) circle (\r);
\draw (\x,\y) node{$L$};
\draw (\x,\y+\r) -- (\x,\y-\r+\d); 
\renewcommand{\y}{-\d}
\draw (-\d/2+\r*0.5,\y+\r*0.866) -- (0-\r*0.5,\y+\d-\r*0.866); 
\draw (\d/2-\r*0.5,\y+\r*0.866) -- (0+\r*0.5,\y+\d-\r*0.866); 
\newcommand{\xx}{3}
\renewcommand{\x}{-\d/2+\xx}
\renewcommand{\y}{0}
\draw (\x+\r,\y) -- (\x+\r+\a,0);
\draw (\x+\r*0.5,\y+\r*0.866) -- (\xx-\r*0.5,\y+\d-\r*0.866);
\renewcommand{\x}{\d/2+\xx}
\renewcommand{\y}{0}
\draw (\x,\y)[color=blue, very thick] circle (\r);
\draw (\x,\y) node{$\phi$};
\draw ({\x+\r},\y) -- (\x+\r+\a,0);
\draw (\x,\y-\r) -- (\x,\y-\r-\a);
\draw (\x-\r*0.5,\y+\r*0.866) -- (\xx+\r*0.5,\y+\d-\r*0.866);
\renewcommand{\x}{\xx}
\renewcommand{\y}{\d}
\draw (\x,\y) circle (\r); 
\draw (\x,\y) node{$R$};
\draw (\x,\y+\r) -- (\x,\y+\r+\a);
\draw (0,\y+\r+\a) edge[in=90,out=90] (\x,\y+\r+\a);  
\draw (-\d/2+\r*0.5,-\d+\r*0.866) edge[in=-90,out=-120] (\xx-\d/2,-\r-\a);  
\draw (\d/2-\r*0.5,-\d+\r*0.866) edge[in=-90,out=-70] (\xx+\d/2,-\r-\a);  
\draw (-\r-\a,\d) edge[in=180,out=180] (\xx-\r-\a-\a,0); 
\draw (\r+\a,\d) edge[in=0,out=0] (\xx+\r+\a+\a,0); 
\end{tikzpicture}
    \end{equation}
    where $L^T$ (row vector) and $R$ (column vector) are the left and right leading eigenvectors of mixed transfer matrix $T$.
\end{enumerate}

We repeat steps 1 to 2 to update $\phi, a$ until convergence is reached within a threshold or the overlap density $\Re\ \lambda_1$ starts increasing in the update for the $a$ tensor.
Note that the update in step 2 usually, but is not guaranteed to, increase the overlap density.
In fact, it is related to a gradient ascent update~\cite{hauru2021riemannian,luchnikov2021qgopt} on tensor $a$.
We observe this update is efficient in increasing the overlap at the initial stage but is slower in the final stage of the convergence comparing to a non-linear conjugate gradient update.
Overall, this method tends to give slightly more accurate results at the cost of slightly longer runtimes compared to \iMMP .

\subsection{\iMMC}
\label{sec:iMMC}

An alternative way to maximize the overlap density $\Re\ \lambda_1$ is to perform  non-linear conjugate gradient ascent on isometries $a$ and $\phi$. 
To respect the isometric constraint on $A$ and $\Phi$, we parametrize the tensors as $a = U_a a_0 = \exp(X_a) a_0$ and $\phi = U_\phi \phi_0 = \exp(X_\phi) \phi_0$, where $a_0$ and $\phi_0$ are any fixed isometries, $U_a$ and $U_\phi$ are unitary matrices acting on the incoming legs of $a$ and $\phi$, respectively, and $X_a$ and $X_\phi$ are anti-Hermitian matrices.
Assuming all bonds to be dimension $\chi$, $X_a$ is a $\chi^2 \times \chi^2$ matrix, while $X_\phi$ is $\chi^3 \times \chi^3$.
The variational space is now the vector space of the anti-Hermitian matrices $X_a$ and $X_\phi$, and conjugate gradient ascent can readily be applied.

Denote the fitness function as 
\begin{equation}
    O[X_a, X_\phi] \equiv \Re\ \lambda_1 (T) \equiv \Re \ \lambda_1(T_{\psi:a[X_a] \phi[X_\phi]}).   
\end{equation}
The change in the objective due to $dX_a$ can be computed as
\begin{align}
    \label{eq:dO}
    dO &= \Re L^T \cdot dT \cdot R = \Re \,\, \tTr(E_{X_a}, dX_a) \\
    &= \tTr(\Re E_{X_a}, d\Re X_a) - \tTr(\Im E_{X_a}, d\Im X_a) \nonumber
\end{align}
where $L^T$ (row vector) and $R$ (column vector) are the left and right leading eigenvectors of $T$. We assume they are normalized so that $L^T R = 1$.
Above, $\tTr$ denotes tensor contraction, and $E_{X_a}$ is the environment of $X_a$ in the tensor contraction:  
\begin{equation}
    \label{eq:EXa}
    E_{X_a} = \newcommand{\LL}{1}      
\renewcommand{\d}{1.0}   
\renewcommand{\r}{0.25}   
\renewcommand{\a}{0.5}   
\begin{tikzpicture}[baseline = (X.base),every node/.style={scale=0.6},scale=0.6]
\newcommand{\x}{0}
\newcommand{\y}{0}
\draw (0,\d/2) node (X) {};
\renewcommand{\x}{0}
\renewcommand{\y}{\d}
\draw (\x,\y)[color=blue, very thick] circle (\r);
\draw (\x,\y) node{$\overline\psi$};
\draw (\x+\r,\y) -- (\x+\r+\a,\y); 
\draw (\x-\r-\a,\y) -- (\x-\r,\y); 
\draw (\x,\y+\r) -- (\x,\y+\r+\a);
\renewcommand{\x}{0}
\renewcommand{\y}{0}
\draw (\x,\y) circle (\r);
\draw (\x,\y) node{$L$};
\draw (\x,\y+\r) -- (\x,\y-\r+\d); 
\renewcommand{\y}{-\d}
\draw (-\d/2+\r*0.5,\y+\r*0.866) -- (0-\r*0.5,\y+\d-\r*0.866); 
\draw (\d/2-\r*0.5,\y+\r*0.866) -- (0+\r*0.5,\y+\d-\r*0.866); 
\newcommand{\xx}{3}
\renewcommand{\x}{-\d/2+\xx}
\renewcommand{\y}{0}
\draw (\x,\y)[color=black, very thick] circle (\r);
\draw (\x,\y) node{$a$};
\draw ({\x-\r},\y) -- (\x-\r-\a,0);
\draw (\x+\r,\y) -- (\x+\r+\a,0);
\draw (\x,\y-\r) -- (\x,\y-\r-\a);
\draw (\x+\r*0.5,\y+\r*0.866) -- (\xx-\r*0.5,\y+\d-\r*0.866);
\renewcommand{\x}{\d/2+\xx}
\renewcommand{\y}{0}
\draw (\x,\y)[color=blue, very thick] circle (\r);
\draw (\x,\y) node{$\phi$};
\draw ({\x+\r},\y) -- (\x+\r+\a,0);
\draw (\x,\y-\r) -- (\x,\y-\r-\a);
\draw (\x-\r*0.5,\y+\r*0.866) -- (\xx+\r*0.5,\y+\d-\r*0.866);
\renewcommand{\x}{\xx}
\renewcommand{\y}{\d}
\draw (\x,\y) circle (\r); 
\draw (\x,\y) node{$R$};
\draw (\x,\y+\r) -- (\x,\y+\r+\a);
\draw (0,\y+\r+\a) edge[in=90,out=90] (\x,\y+\r+\a);  
\draw (-\d/2+\r*0.5,-\d+\r*0.866) edge[in=-90,out=-120] (\xx-\d/2,-\r-\a*2);  
\draw (\d/2-\r*0.5,-\d+\r*0.866) edge[in=-90,out=-70] (\xx+\d/2,-\r-\a);  
\draw (-\r-\a,\d) edge[in=180,out=180] (\xx-\r-\a-\a*2,0); 
\draw (\r+\a,\d) edge[in=0,out=0] (\xx+\r+\a+\a,0); 
\end{tikzpicture}
\end{equation}
Note the implicit dependency of $L$ and $R$ on $X_a$ in the current form give zero contribution to the change in the objective~\cite{xie2020eigensolver}.

The ascent direction for the maximization is thus given by the derivative: 
\begin{equation}
    \frac{\partial O}{\partial \Re X_a} = \Re E_{X_a},  
    \hspace{5mm} 
    \frac{\partial O}{\partial \Im X_a} = -\Im E_{X_a}
\end{equation}
Thus, the ascent direction of $X_a$ is $\overline{E_{X_a}}$~\footnote{
Here we treat the real and imaginary part of $X_a$ as independent variables. An alternative way to arrive at the same result is to treat $X_a$ as complex-valued variables and the ascent direction is given by ${dO}/{d \overline{X_a}} = \frac{1}{2} R^\dagger (dT^\dagger /{d \overline{X_a}}) L = \frac{1}{2}\overline{E_{X_a}}$.
}.
In fact, $dO$ is manifestly positive if $dX_a = \overline{E_{X_a}}$, where the overline denotes complex conjugation. 
Note that the $E_{X_a}$ computed in Eq.~\eqref{eq:EXa} and the ascent direction $\overline{E_{X_a}}$ are generally not anti-Hermitian.
Therefore, one needs to anti-Hermitian-ize $dX_a$ so that the updated $X_a$ is still anti-Hermitian.
One can analogously compute the ascent direction for $X_\phi$.

With these ingredients, the conjugate gradient ascent is done as follows.
\begin{enumerate}
    \item At CG step $k$, compute $E_{X_a}$ and $E_{X_\phi}$ and anti-Hermitian-ize them. 
    We overload notation and use $E_X$ to refer to the anti-Hermitian environments below.
    \item Set the ascent direction, $H$, using the gradient and the ascent direction from the previous step: 
        \begin{align}
            \begin{split}
                H_a(k) = \overline{E_{X_a}} - \beta H_a(k-1)
                \\
                H_\phi(k) = \overline{E_{X_\phi}} - \beta H_\phi(k-1) 
            \end{split}
        \end{align}
    where $\beta$ is determined by a non-linear CG $\beta$-mixer, e.g. Polak-Ribi\'ere. 
    
    \item Parametrize $a(t)$ and $\phi(t)$ along the ascent direction: $a(t) = \exp(t H_a) a$ and $\phi(t) = \exp(t H_\phi) \phi$.
    Via the linesearch algorithm, look for $t_\max$ at which $O(t_\max)$ is maximized along the $t$-curve. 
    This needs the computation of $dO(t)/dt$:
    \begin{align}
        \begin{split}
            \frac{dO(t)}{dt} =& \Re\,\, \tTr(E_{X_{a(t)}}, H_a) 
            \\&+  \Re\,\, \tTr(E_{X_{\phi(t)}}, H_\phi) 
            \end{split}
    \end{align}
    
    \item Update $a \rightarrow a(t_\max)$ and $\phi \rightarrow \phi(t_\max)$ and $k \rightarrow k+1$. 
\end{enumerate}
This process is iterated until $a$ and $\phi$ converge to within the desired threshold.

\subsection{Error measures and a structure theorem}
\label{sec:error}

Before presenting benchmarks and applications of the iMM, we first discuss error measures for the splitting problem in Eq.~\eqref{eq:iQR} and introduce a structure theorem governing the assignment of bond dimensions to tensors $a$ and $\phi$.

As before, let $\ket{\Psi}$ be the input to the iMM, and $A$ and $\ket{\Phi}$ be the output. 
$\ket{\Psi}$ and $\ket{\Phi}$ are always normalized.
We consider the error of the fidelity density, 
\begin{equation}
    \label{eq:total_error}
    \epsilon \equiv 1 - (\Re \, \lambda_1(T))^2 = 1 -  \lambda_1(T)^2,  
\end{equation}
where we assume the iMM algorithm finds $A$ and $\ket{\Phi}$ such that the dominant eigenvalue of $T$ is real.
To motivate this definition, let us consider the example of splitting a finite and uniform system of size $L$, as in Eq.~\eqref{eq:finite_MM}.
The error of the splitting is given by 
\begin{align}
    \label{eq:epsilon}
    \begin{split}
        \norm{\ket{\Psi}-A\ket{\Phi}}^2 =& 2 - 2 \Re \bra{\Psi}A\ket{\Phi} \\
        &\approx 2-2 (\sqrt{1-\epsilon})^L \approx \epsilon L.
    \end{split}
\end{align}
Thus, $\epsilon$ is the intensive error density.

It can be shown that $\epsilon$ is a sum of two errors (see Appendix~\ref{sec:error_decomp} for the derivation):
\begin{align}
  \epsilon &= \epsilon_p + \epsilon_t + O(\epsilon_p^2, \epsilon_t^2, \epsilon_p \epsilon_t) 
  \label{eq:error_decomposition}
\end{align}
where 
\begin{align}
    \epsilon_p &\equiv 1 - \lambda_1(T_{A^\dag\Psi:A^\dag\Psi}), \nonumber\\
    \epsilon_t &\equiv 1 - \left( \lambda_1(T_{\widetilde{A^\dag\Psi}:\Phi}) \right)^2.
    \label{eq:error_defs}
\end{align}
Here $\widetilde{A^\dag \ket{\Psi}} = A^\dag \ket{\Psi}/\norm{A^\dag \ket{\Psi}}$ is normalized.
$\epsilon_p$ measures the norm that $A^\dag \ket{\Psi}$ loses due to the projection and $\epsilon_t$ measures the truncation error due to approximating $\widetilde{A^\dag\ket{\Psi}}$ with the MPS $\ket{\Phi}$.

In practice, these errors guide the choice of internal bond dimensions of the iMM. 
Let us denote the bond dimensions of the iMM as the following: 
\begin{equation}
    \label{eq:bond_dims}
    \renewcommand{\d}{1.0}   
\renewcommand{\r}{0.25}   
\renewcommand{\a}{0.5}   
\newcommand{\dH}{1.0}   
\newcommand{\aH}{0.5}
\newcommand{\x}{0}
\newcommand{\y}{0}
\begin{tikzpicture}[baseline = (X.base),every node/.style={scale=1.0},scale=1.0]
\renewcommand{\x}{0}
\renewcommand{\y}{0}
\draw (0, -0.1) node (X) {};

\draw [color=blue, very thick](\x,\y) circle (\r);
\draw (\x,\y) node {$\psi$};
\draw [midarrow={latex reversed}](\x+\r,\y) -- (\x+\r+\aH,\y); 
\draw [midarrow={latex}](\x-\r-\aH,\y) -- (\x-\r,\y); 
\draw [midarrow={latex reversed}](\x,\y+\r) -- (\x,\y+\r+\a);
\draw [midarrow={latex}](\x,\y-\r) -- (\x,\y-\r-\a);
\draw (\x,\d) node{$\chi_0$}; 
\draw (\x,-\d) node{$\chi_0$}; 
\draw (\x-\d,0) node{$\chi_\ell$}; 
\draw (\x+\d,0) node{$\chi_r$}; 
\draw (0,-1.5*\d-0.1) node{$\ket{\Psi}$}; 
\end{tikzpicture}
\approx
\begin{tikzpicture}[baseline = (X.base),every node/.style={scale=1.0},scale=1.0]
\renewcommand{\x}{0}
\renewcommand{\y}{0}
\draw (0, -0.1) node (X) {};

\foreach \i in {0,...,1}
{
    \renewcommand{\x}{\i*\dH}
    \renewcommand{\y}{0}
    \pgfmathsetmacro{\Hdir}{ifthenelse(\i == 0, "latex", "latex reversed")}
    \ifthenelse{\i<1}{
        \draw [color=black, very thick] (\x,\y) circle (\r);\draw (\x,\y) node{$a$};
    }{
        \draw [color=blue, very thick] (\x,\y) circle (\r);\draw (\x,\y) node{$\phi$};
    }
    \draw [midarrow={\Hdir}](\x+\r,\y) -- (\x+\r+\aH,\y); 
    \draw [midarrow={latex}](\x,\y+\r) -- (\x,\y+\r+\a);
    \draw [midarrow={latex}](\x,0-\r-\a) -- (\x,0-\r);
}
\renewcommand{\x}{0}
\renewcommand{\y}{0}
\draw [midarrow={latex reversed}](\x-\r,\y) -- (\x-\r-\aH,\y); 

\draw (0,-\d) node{$\chi_v$};
\draw (0,\d) node{$\chi_v$};
\draw (\d,-\d) node{$\eta$};
\draw (\d,\d) node{$\eta$};
\draw (-\d,0) node{$\chi_\ell$}; 
\draw (2*\d,0) node{$\chi_r$};
\draw (\d/2,0.3) node{$\chi_h$}; 
\draw (0,-1.5*\d-0.1) node{$A$}; 
\draw (\d,-1.5*\d-0.1) node{$\ket{\Phi}$}; 
\end{tikzpicture}.
\end{equation}
Assuming the non-convex optimization in the iMM is successful, increasing $\eta$ decreases $\epsilon_t$, and increasing $\chi_v$ and $\chi_h$ decreases $\epsilon_p$.   
Perhaps less obvious is that increasing $\chi_v$ and $\chi_h$ can also decrease $\epsilon_t$, because $A$, in addition to being a projector, also serves as a disentangler of $\Psi$; see Sec.~\ref{sec:disentangling_iMM}.
That is, if $A$ is well-chosen, $\widetilde{A^\dag \ket{\Psi}}$ will have less entanglement entropy than $\Psi$, and, when truncated to an MPS with smaller bond dimension, will have less truncation error compared to directly truncating $\ket{\Psi}$.
Implicitly in iMM, when minimizing $\epsilon$, the optimization reaches a balance between the projecting ($\epsilon_p$) and the disentangling ($\epsilon_t$) role of $A$ so that their collective effect minimizes $\epsilon$.
The disentangling effect of $A$ will be reflected in $\ket{\Phi}$ having less entanglement entropy than $\ket{\Psi}$, as demonstrated in Sec.~\ref{sec:area_law}.

The above discussion would appear to indicates that, as long as the computational cost is affordable, the internal bond dimensions  should be as large as possible to reduce $\epsilon$. 
In regular iMPS compression algorithms, one can set the bond dimension of the trial state to be as large as desired, and the only side-effect is that the code will run for longer.
This is indeed the case  $\eta$, as it is just the bond dimension of the new MPS $\ket{\Phi}$.
For $\chi_h$, we have to choose $\chi_h \leq \chi_\ell$ in order for $A$ to be an isometry (see Eq.~\eqref{eq:bond_dims}).

However, care must be taken in choosing the bond dimension $\chi_v$.
As we explain below, carelessly increasing $\chi_v$ may cause the failure of both iMM algorithms and subsequent iTEBD algorithms.
We now present a structure theorem to guide the choice of $\chi_v$ given choices for $\eta$ and $\chi_h$.
\begin{thm}
\label{thm:structure}
Suppose the two-sided iMPS $\ket{\Psi}$ can be exactly split via iMM: $\ket{\Psi}=A\ket{\Phi}$, with bond dimensions $\eta$, $\chi_h=\chi_\ell$, and $\chi_v$. 
If the iMM for $\ket{\Psi}$ is performed with bond dimensions $\eta$, $\chi_h$ and $\chi_v'>\chi_v$, there is an exact solution $\ket{\Psi}=A'\ket{\Phi'}$, where $A'$ is an isoTNO such that, for any normalized two-sided iMPS $\ket{B}$, the dominant eigenvalue of $T_{A'\ket{B}:A'\ket{B}}$ is degenerate.
\end{thm}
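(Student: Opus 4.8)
I would prove the theorem constructively, by \emph{padding} the exact splitting $\ket{\Psi}=A\ket{\Phi}$ with a spurious isometric vertical sector. Concretely: let $V$ be the vertical bond space of $A$ ($\dim V=\chi_v$), pick an auxiliary space $V^{\perp}$ with $\dim V^{\perp}=\chi_v'-\chi_v\ge 1$, and set $V'=V\oplus V^{\perp}$ so that $\dim V'=\chi_v'$. Because $\chi_h=\chi_\ell$, there exists a valid isoTNO tensor $\tilde a$ with vertical bond $V^{\perp}$ and the same horizontal and physical legs and the same arrow pattern as $a$: the local isometry condition then only asks for a unitary of a fixed square size on the incoming legs (for instance an appropriately reshaped identity). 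Define $a':=a\oplus\tilde a$, block diagonal with respect to $V'=V\oplus V^{\perp}$ on \emph{both} vertical legs, let $A'$ be the corresponding iMPO, and take $\ket{\Phi'}:=\ket{\Phi}$. Since $a$ and $\tilde a$ are locally isometric with mutually orthogonal ranges, $a'$ is locally isometric, so $A'$ is a genuine isoTNO with vertical bond $\chi_v'$, and $\tilde A$ (the iMPO built from $\tilde a$) is a genuine isoTNO as well.

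\paragraph*{Exactness of the padded solution.}
Contracting the $\chi_h$ bond, the combined iMPS tensor $a'\!\cdot\!\phi$ is block diagonal over $V'\otimes V^{\phi}=(V\otimes V^{\phi})\oplus(V^{\perp}\otimes V^{\phi})$, equal to the $\ket{\Psi}$ tensor $a\!\cdot\!\phi$ on the first block and to the $\tilde A\ket{\Phi}$ tensor on the second. Hence the mixed transfer matrix between $\ket{\Psi}$ and $A'\ket{\Phi'}$ splits as $T_{\psi:a\phi}\oplus T_{\psi:\tilde a\phi}$. The first summand has $\lambda_1=1$ by the hypothesis that $\ket{\Psi}=A\ket{\Phi}$ is an exact split, and the second has spectral radius $\le 1$, since a mixed transfer matrix of two normalized iMPS obeys $|\lambda_1|\le 1$ (standard Cauchy--Schwarz bound). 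Therefore $\lambda_1(T_{\psi:a'\phi})=1$, i.e.\ $\epsilon=0$ in Eq.~\eqref{eq:total_error}, and $(A',\ket{\Phi'})$ is an exact iMM solution with bond dimensions $\eta$, $\chi_h$, $\chi_v'$. (The spurious $V^{\perp}$ sector is invisible to $\epsilon$ precisely because it is subdominant in the mixed transfer matrix; this is exactly the mechanism that produces the pathology below.)

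\paragraph*{Degeneracy.}
Let $\ket{B}$ be any normalized two-sided iMPS with tensor $b$. The tensor of $A'\ket{B}$ is $a'\!\cdot\!b=(a\!\cdot\!b)\oplus(\tilde a\!\cdot\!b)$, block diagonal over $V'\otimes V^{b}$, so $T_{A'\ket{B}:A'\ket{B}}$ decomposes as a direct sum of four pieces labelled by pairs of sectors: two diagonal pieces $T_{A\ket{B}:A\ket{B}}$ and $T_{\tilde A\ket{B}:\tilde A\ket{B}}$, and two mixed pieces with spectral radius $\le 1$. Since $A$ and $\tilde A$ are isometries, $A\ket{B}$ and $\tilde A\ket{B}$ are normalized iMPS, so each of the two diagonal pieces has dominant eigenvalue exactly $1$. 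Their Perron vectors, extended by zero to the other sectors, are two linearly independent eigenvectors of $T_{A'\ket{B}:A'\ket{B}}$ at eigenvalue $1$, and no part of the spectrum exceeds $1$; hence the dominant eigenvalue is $1$ with geometric multiplicity $\ge 2$, i.e.\ degenerate, for every $\ket{B}$. This completes the proof.

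\paragraph*{Where the care is needed.}
The argument itself is short; the delicate points are bookkeeping. First, one must check that the complementary tensor $\tilde a$ with the correct (incoming) arrow structure genuinely exists and that local isometry of $a'$ upgrades to $A'$ being an isoTNO --- this is exactly where $\chi_h=\chi_\ell$ enters, making the required local map square, and it is otherwise the standard fact that the isoTNO property rests only on the local isometry plus the uniform arrow pattern. Second, and more conceptually, one should be explicit that ``exact split'' is meant in the fidelity-density sense $\epsilon=0$ of Eq.~\eqref{eq:total_error}: it is only in this sense that the padded solution is exact, since as a physical state $A'\ket{\Phi'}$ carries the extra $\tilde A\ket{\Phi}$ sector with equal weight. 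I would close with the remark that ties the theorem to its motivation: a degenerate dominant eigenvalue of $T_{A'\ket{B}:A'\ket{B}}$ means $A'$ is reducible, so the transfer-matrix fixed points of $A'\ket{B}$ are non-unique, and it is this non-uniqueness that destabilizes the downstream iMPS and iTEBD\textsuperscript{2} routines when $\chi_v$ is chosen carelessly large.
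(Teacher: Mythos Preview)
Your proof is correct and follows essentially the same route as the paper: both construct $a'=a\oplus\tilde a$ block-diagonally on the enlarged vertical bond (the paper writes $\tilde a$ as $a^\perp$ and relegates its existence to Appendix~\ref{sec:isofill}), then use the resulting four-block decomposition of $T_{A'B:A'B}$ to exhibit two independent eigenvalue-$1$ sectors. Your treatment is in fact slightly more careful than the paper's in two places: you spell out why the padded splitting is ``exact'' via the direct-sum decomposition of the mixed transfer matrix $T_{\psi:a'\phi}$, and you flag explicitly that this exactness is in the fidelity-density sense $\epsilon=0$ rather than literal equality of states---a point the paper glosses over.
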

\begin{proof}
Suppose that a set of internal bond dimensions $\chi_v$, $\chi_h$, and $\eta$ suffices to give an exact iMM splitting: $\ket{\Psi} = A \ket{\Phi}$. 
Then when $\chi'_v > \chi_v$, $\chi'_h = \chi_h$, and $\eta' = \eta$ are used for the same splitting problem, clearly there is an exact iMM splitting $\ket{\Psi} = A'\ket{\Phi'}$, where $\ket{\Phi'}$ is the same MPS as $\ket{\Phi}$ up to some gauge difference on the virtual bond, and $A'$ is $A$ enlarged from bond dimension $\chi_v$ to $\chi'_v$ while maintaining the isometric constraints.
Let $a$ and $a'$ be the tensors making up $A$ and $A'$. 
When $\chi_h = \chi_l$, namely when $a$ is unitary, enlarging the bond dimension $\chi_v$ of the isometry while maintaining the isometric constraint yields $a' = a\oplus a^\bot$, where $a^\bot$ is the orthogonal complement to $a$ in the enlarged space; this is shown in Appendix~\ref{sec:isofill}.
Then, for {\it any} normalized two-sided MPS $\ket{B}$, 
\begin{align}
    \begin{split}
        T_{A'B:A'B} &= T_{AB:AB} \oplus T_{A^\bot B:AB} \\
        &\oplus T_{AB:A^\bot B} \oplus T_{A^\bot B:A^\bot B}.
    \end{split}
\end{align}
In particular, the spectrum of both $T_{AB:AB}$ and $T_{A^\bot B:A^\bot B}$ contains a copy of the spectrum of $T_{B:B}$.
As $\ket{B}$ is normalized, $T_{A'B:A'B}$ will have at least two eigenvalues equal to 1, meaning that $A'\ket{B}$ becomes a ``cat-state'' (non-injective) iMPS. 
\end{proof}

If in addition, the solution to iMM with $\chi'_v$ is unique then the cat-state described above will be the only exact solution, and if the iMM optimization is successful, the optimization will result in such cat-states.
In practice, we find that such cat-states, if they exist, are always found. 
Such states are pathological because the convergence of many iMPS algorithms, including iDMRG and iTEBD, scales inversely with the gap of the transfer matrix of the iMPS, which for $A'\ket{B}$ is zero.

In particular, if $\eta=\chi_0$ and $\chi_h = \chi_l$, then $\chi_v = 1$ is sufficient to give an exact iMM splitting.
As a consequence of Theorem~\ref{thm:structure}, any larger $\chi_v$ will lead to degeneracies in the spectrum.
Choosing $\chi_h$ strictly less than $\chi_l$ is a one way to avoid the conditions of the Theorem~\ref{thm:structure} from being satisfied in simulations --- this motivates the choice of the boundary bond dimension, $D_b$, to be larger than 1.

\section{iMM Benchmarks}
\label{sec:benchmarks}

Having introduced algorithms for performing the iMM splitting procedure depicted in Eq.~\eqref{eq:iQR} and decomposed the resulting error as the sum of a truncation and projection error terms, we now perform several experiments.
The first is to compare the four different iMM algorithms introduced in Sec.~\ref{sec:iMM} and various combinations of these algorithms.
From this, we conclude that the combination of \iMML\, and \iMMP\, produces accurate results without significant computational cost, and we use this combination for all future experiments.
We then investigate the disentangling properties of the isoTNO $A$, finding that varying its vertical bond dimension $\chi_v$ can greatly decrease both $\epsilon_p$ and $\epsilon_t$.
Finally, we repeatedly apply iMM to a strip isoTNS, sweeping back and forth, and find that the accumulated error saturates after about 20 iterations.

In this section and those that follow, we consider the 2D transverse field Ising (TFI) model,
\begin{align}
    H &= -\sum_{\langle i, j \rangle} Z_i Z_j - g\sum_{i} X_i,
    \label{eq:2DTFI}
\end{align}
to benchmark the performance of the iMM algorithms and the iTEBD\textsuperscript{2} algorithm.
Unless otherwise noted, we choose $g=3.5$ to be in the paramagnetic phase. 
The critical coupling for this model, obtained via cluster Monte Carlo simulations, in the thermodynamic limit in both directions is $g_C^{2D} \approx 3.04438$ \cite{blote2002TFI}, while from iDMRG the critical coupling on infinite cylinders increases from $g_C^{1D}=1$ towards $g_C^{2D}$ with width \cite{hashizume2022TFI}.

\begin{figure*}[ht!]
\begin{tabular}{>{\centering\arraybackslash}b{0.30\linewidth-1\tabcolsep\relax} >{\centering\arraybackslash}b{0.7\linewidth-1\tabcolsep\relax}}
\begin{align*}
    \renewcommand{\d}{1.0}   
\renewcommand{\r}{0.25}   
\renewcommand{\a}{0.5}   
\newcommand{\dH}{1.0}   
\newcommand{\aH}{0.5}
\newcommand{\x}{0}
\newcommand{\y}{0}
\begin{tikzpicture}[baseline = (X.base),every node/.style={scale=1.0},scale=1.0]
\renewcommand{\x}{0}
\renewcommand{\y}{0}
\draw (0, -0.1) node (X) {};

\draw [color=blue, very thick](\x,\y) circle (\r);
\draw (\x,\y) node {$\psi$};
\draw [midarrow={latex reversed}](\x+\r,\y) -- (\x+\r+\aH,\y); 
\draw [midarrow={latex}](\x-\r-\aH,\y) -- (\x-\r,\y); 
\draw [midarrow={latex reversed}](\x,\y+\r) -- (\x,\y+\r+\a);
\draw [midarrow={latex}](\x,\y-\r) -- (\x,\y-\r-\a);
\draw (\x,\d) node{$128$}; 
\draw (\x,-\d) node{$128$}; 
\draw (\x-\d,0) node{$4$}; 
\draw (\x+\d,0) node{$4$}; 
\draw (0,-1.5*\d-0.1) node{$\ket{\Psi}$}; 

\draw (\x+\d,-\d) node{$\approx$}; 

\foreach \i in {0,...,1}
{
    \renewcommand{\x}{\i*\dH+2}
    \renewcommand{\y}{-2}
    \pgfmathsetmacro{\Hdir}{ifthenelse(\i == 0, "latex", "latex reversed")}
    \ifthenelse{\i<1}{
        \draw [color=black, very thick] (\x,\y) circle (\r);\draw (\x,\y) node{$a$};
    }{
        \draw [color=blue, very thick] (\x,\y) circle (\r);\draw (\x,\y) node{$\phi$};
    }
    \draw [midarrow={\Hdir}](\x+\r,\y) -- (\x+\r+\aH,\y); 
    \draw [midarrow={latex}](\x,\y+\r) -- (\x,\y+\r+\a);
    \draw [midarrow={latex}](\x,\y-\r-\a) -- (\x,\y-\r);
}
\renewcommand{\x}{2}
\renewcommand{\y}{-2}
\draw [midarrow={latex reversed}](\x-\r,\y) -- (\x-\r-\aH,\y); 

\draw (\x,\y-\d) node{$2$};
\draw (\x,\y+\d) node{$2$};
\draw (\x+\d,\y-\d) node{$10$};
\draw (\x+\d,\y+\d) node{$10$};
\draw (\x-\d,\y) node{$4$}; 
\draw (\x+2*\d,\y) node{$4$};
\draw (\x+\d/2,\y+0.3) node{$2$}; 
\draw (\x,\y-1.5*\d-0.1) node{$A$}; 
\draw (\x+\d,\y-1.5*\d-0.1) node{$\ket{\Phi}$}; 
\end{tikzpicture}
\end{align*}
\captionof{figure}{Splitting problem considered for iMM method benchmarking results in Table~\ref{tab:single}. 
This two-sided iMPS of $\chi=128$ and $\chi_\ell = \chi_r = 4$ represents the ground state of 2D TFI model on $L_x=4$ infinite strip, as described in Sec.~\ref{sec:single_iMM}.}
\label{fig:single}
    &
\renewcommand{\arraystretch}{1.3}
\begin{tabular}{
|| >{\centering\arraybackslash}p{2.5cm} | >{\centering\arraybackslash}p{3cm} | >{\centering\arraybackslash}p{3cm} | >{\raggedleft\arraybackslash}p{1cm} c >{\raggedright\arraybackslash}p{1cm} ||
}
 \hline
 \hline
 iMM Method & $\epsilon_p$ & $\epsilon_t$ & \multicolumn{3}{c ||}{time (s)} \\
 \hline
\texttt{L} & 1.8e-03 $\pm$ 1.0e-06 & 3.4e-05 $\pm$ 8.2e-06 & 0.7 & $\pm$ & 0.01 \\ 
\hline
\texttt{P} & 1.3e-03 $\pm$ 1.4e-03 & 2.3e-05 $\pm$ 4.4e-06 & 38 & $\pm$ & 15 \\ 
\hline
 \texttt{M} & 1.0e-02 $\pm$ 6.7e-03 & 1.4e-03 $\pm$ 1.7e-03 & 12 & $\pm$ & 21 \\ 
\hline
\texttt{C} & 2.2e-01 $\pm$ 2.6e-01 & 3.7e-01 $\pm$ 3.6e-01 & 21 & $\pm$ & 10 \\ 
\hline
\texttt{LP} & 6.2e-04 $\pm$ 1.9e-08 & 2.6e-05 $\pm$ 1.8e-08 & 11 & $\pm$ & 0.1 \\ 
\hline
\texttt{LM} & 5.0e-04 $\pm$ 2.0e-10 & 9.4e-06 $\pm$ 5.7e-10 & 62 & $\pm$ & 0.6 \\ 
\hline
\texttt{LC} & 5.0e-04 $\pm$ 1.2e-07 & 4.2e-05 $\pm$ 1.1e-06 & 29 & $\pm$ & 1.9 \\ 
\hline
\texttt{LPM} & 4.6e-04 $\pm$ 2.0e-08 & 5.8e-07 $\pm$ 8.8e-11 & 12 & $\pm$ & 0.05 \\ 
\hline
\texttt{LPC} & 4.6e-04 $\pm$ 4.2e-06 & 5.2e-06 $\pm$ 2.6e-07 & 19 & $\pm$ & 1.4 \\
\hline
\texttt{LMC} & 5.0e-04 $\pm$ 2.1e-11 & 9.3e-06 $\pm$ 2.1e-11 & 72 & $\pm$ & 0.4 \\ 
\hline 
\texttt{LPMC} & 4.5e-04 $\pm$ 3.7e-10 & 2.4e-07 $\pm$ 1.0e-10 & 20 & $\pm$ & 0.07 \\ 
\hline
\hline
\end{tabular}
\captionof{table}{Projection and truncation errors as defined in Eq.~\eqref{eq:error_defs} in Sec.~\ref{sec:single_iMM} and runtimes for different iMM methods applied to the splitting problem in Fig.~\ref{fig:single}. 
Results from five different runs are average to give the standard deviations. We conclude that the combination of \iMML\, and \iMMP\, provides an accurate result without incurring a large computational cost.
}
\label{tab:single}
\end{tabular}
\end{figure*}

\subsection{Benchmark for a single run of iMM}
\label{sec:single_iMM}

We begin by investigating the performance of the iMM algorithms introduced earlier. 
The input state is the ground state of the 2D TFI model on an infinite strip of width $L_x=4$ obtained from iDMRG with bond dimension $\chi=128$.
We contract the four iMPS tensors in the unit cell (one row) to one iMPS tensor with four physical legs, each of dimension $d=2$.
We then group the two physical legs corresponding to the left two sites in the row into the left leg and the remaining two legs into the right leg of the two-sided iMPS. 
This produces the iMPS shown in Fig.~\ref{fig:single}.
The splitting problem considered involves both projection and truncation, as $\chi_h < \chi_\ell$.

We compare all four iMM algorithms discussed above and various combinations of them to the two-sided iMPS.
We report errors and the runtimes on a standard workstation, averaged over five runs, in Table~\ref{tab:single}.
In this discussion, we use the shorthand notation \texttt{L} to represent \iMML\,, \texttt{P} for \iMMP\,, \texttt{M} for \iMMM\,, and \texttt{C} for \iMMC\, for convenience.
We find that \texttt{L} is indeed the fastest but does not achieve the accuracy of other standalone methods.
However, even though this method is not guaranteed to converge, the gauge fixing procedure reduces variance, indicating that this method can be used to provide a stable starting point for further methods.
Seeding methods \texttt{P} and \texttt{M} with the $a$ and $\phi$ tensors provided by \texttt{L}, yielding methods \texttt{LP} and \texttt{LM}, decreases the error and standard deviation, while also for \texttt{LP} greatly reducing the runtime, compared to \texttt{P} and \texttt{M} alone.
Further improving the results with \texttt{C} is possible but the improvements are not significant yet have added computational cost.
Thus for all future experiments, we use \texttt{LP}.

Additionally, empirically we find that when $\chi_v$ is large, it is best to use \texttt{L} to produce starting solutions with a smaller vertical bond dimension, say $\chi_v'=4$, and then isometrically expand the vertical dimensions of the $a$ tensor gradually.
At each intermediate bond dimension between $\chi_v'$ and $\chi_v$, we use \texttt{P} to improve the result. 
We then expand the isometric tensor $a$ by viewing it as a matrix by grouping the incoming legs into a row index and the outgoing legs into a column index. 
The incoming row index can be increased by zero padding, while the outgoing leg must be increased by adding orthogonal columns.
Such a gradual iMM procedure improves the stability and performance of the splitting as $\chi_v$ grows.
Theorem~\ref{thm:structure} indicates that isometric filling an $A$ produced by an already exactly splitting will lead to a degenerate spectrum, so we do the expansion only if either $\eta < \chi_0$ or $\chi_h < \chi_\ell$.

\subsection{Disentangling effect of $A$}
\label{sec:disentangling_iMM}

We now investigate the disentangling properties of $A$. 
As discussed in Sec.~\ref{sec:error}, by tuning the bond dimensions $\chi_v$, $\chi_h$, and $\eta$, we can control the projection and truncation errors of the splitting procedure; see Eq.~\eqref{eq:bond_dims} for definitions of these bond dimensions.
We know that (1) increasing $\eta$ will decrease $\epsilon_t$ as this error is simply MPS truncation error; and that (2) increasing $\chi_h$ up to the maximally allowed $\chi_\ell$ will reduce the projection error.
Here, we want to understand the effect of increasing $\chi_v$ on both type of errors.
Following a similar setup as in previous section, we obtain the result shown in Fig.~\ref{fig:disentangling}.
We find that increasing $\chi_v$ can significantly reduce both types of error with a fixed $\eta$ and $\chi_h$.
While we expect the decrease in $\varepsilon_p$ with increasing $\chi_v$, the less expected decrease in $\varepsilon_t$ indicates $A$ has a disentangling effect on $\Phi$, making it easier to truncate and thus reducing $\epsilon_t$. 
Increasing $\chi_v$ increases the complexity of the isoTNO, and thus a more complex operator allows for more successful disentangling.

Since the $A$ is acting only on one side of the two-sided MPS, it is similar to the disentangler used for the purified wavefunction of a mixed state, which only acts on the auxiliary indices.
This indicate that iMM algorithm can also be applied to produce more efficient purified MPS description, a problem that was considered in the time evolution of thermofield double states~\cite{hauschild2018finding}.
Another implication is that there will be a minimal (purification) entanglement which cannot be removed from the two-sided MPS by applying $A^\dagger$.
We see such effect in Fig.~\ref{fig:disentangling} that the errors of both truncation $\epsilon_t$ and projection $\epsilon_p$ saturate with the increasing $\chi_v$.

\begin{figure}
\centering
\input{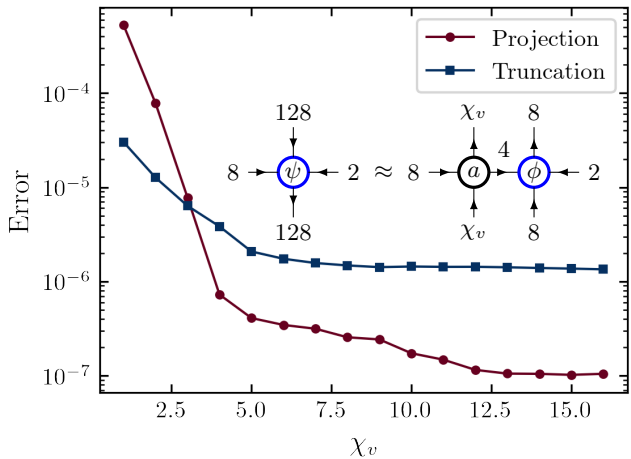}
\caption{
Projection error (red), $\epsilon_p$, and truncation error (blue), $\epsilon_t$, as a function of $\chi_v$, as calculated by Eq.\eqref{eq:error_defs}, using the combination of \iMML\, and \iMMP. 
\textbf{(Inset)} Splitting problem considered. The two-sided iMPS is the ground state of the $L_x=4$ 2D TFI from iDMRG, similar to Fig.~\ref{fig:single}. 
But three physical indices are grouped to the left and one to the right.}
\label{fig:disentangling}
\end{figure}

\subsection{Repeated application of iMM to an isoTNS}
\label{sec:repeated_iMM}

The iMM algorithm performs the splitting in Eq.~\eqref{eq:iQR} approximately and so, unlike moving the OC by QR decomposition in 1D MPS, iMM will not exactly preserve the state.
Thus it is important to quantify how repeated applications of iMM affect an isoTNS and perturb it away from the starting state.
Here we consider specifically the case where the starting state is an approximate ground state of the 2D TFI model on an $L_x=6$ strip represented by an isoTNS.
We measure the deviation of the state after consecutive iMM from the original state.

Suppose that the current (after $n-1$ iterations) isoTNS $\ket{\Psi^{(n-1)}}$ has its orthogonality column $\ket{\Psi^{(n-1)}_{1}}$ as the leftmost column of the strip; we then write the tensor network as $\ket{\Psi^{(n-1)}} = \ket{\Psi^{(n-1)}_{1}} B^{(n-1)}_{2} \ldots B^{(n-1)}_{L_x}$, where $B$ denotes an isoTNO with horizontal isometry arrows pointing left.
We fuse the columns $\ket{\Psi^{(n-1)}_{1}}$ and $B^{(n-1)}_{2}$ to form a doubled column $\ket{\Psi^{(n)}_{1,2}}$ with two physical sites per tensor;
we split this doubled orthogonality column using iMM into the isoTNO $A^{(n)}_{1}$ with rightward pointing isometry arrows and the new orthogonality column $\ket{\Psi^{(n)}_{2}}$, with each column having a physical index.
We perform this procedure of merging and splitting two columns a total of $L_x-1$ times, moving the orthogonality column entirely to the right of the strip.
This completes one sweep and produces a new isoTNS $\ket{\Psi^{(n)}} = A^{(n)}_{1} \ldots A^{(n)}_{L_x-1} \ket{\Psi^{(n)}_{L_x}}$. 
We can now repeat the process moving to the left (or in practice horizontally mirroring the isoTNS and again moving to the right) to produce $\ket{\Psi^{(n+1)}}$. 
We measure the error of fidelity density as in Eq.~\eqref{eq:total_error}, where the transfer matrix $T$ is formed by one row of the original state $\ket{\Psi^{(0)}}$ and the state at $n$-iteration $\ket{\Psi^{(n)}}$.

\begin{figure}
    \centering
    \includegraphics[width=\columnwidth]{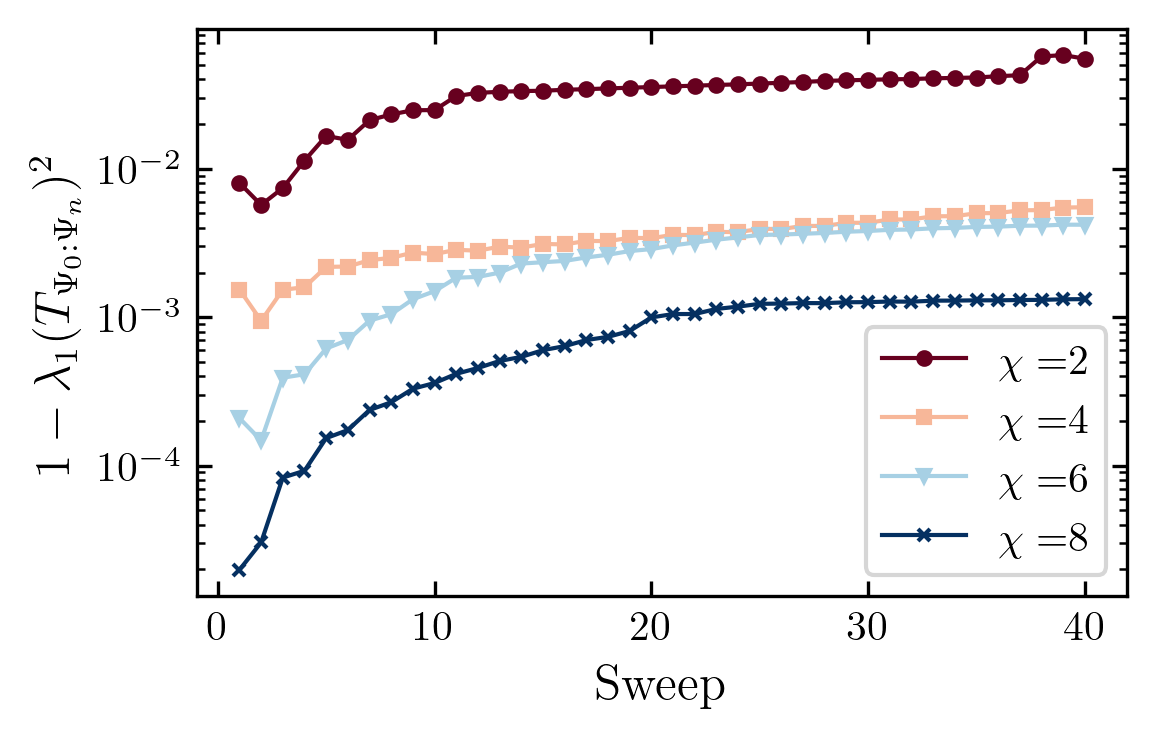}
    \caption{Error between original isoTNS $\ket{\Psi^{(0)}}$ and isoTNS $\ket{\Psi^{(n)}}$ produced by repeated $n$ iMM sweeps. $\ket{\Psi^{(0)}}$ is an isoTNS with $L_x=6$, $\chi_{\text{init}} = 2$, $D_b = 1$ produced by peeling of iDMRG GS of the 2D TFI model, as described in Sec.~\ref{sec:iDMRG_to_isoTNS}.
    }
    \label{fig:repeated_iMM}
\end{figure}

The results of this procedure for an $L_x=6$, $\chi_{\text{init}}=2$ isoTNS is shown in Fig.~\ref{fig:repeated_iMM}. 
This isoTNS is produced by the peeling procedure discussed in Sec.~\ref{sec:iDMRG_to_isoTNS} and thus is an approximation of the 2D TFI ground state. 
We choose bond dimensions in the iMM splitting to be $\chi = \chi_v = \chi_h$ and $\eta = 24$.
After each sweep, the maximum bond dimension in $\ket{\Psi_{n}}$ is $\chi$.
Some bonds will have values smaller than $\chi$ so that the isometric conditions on each tensor are satisfied.
We find that for each of the bond dimensions used in the iMM and thus the bond dimensions of the resulting isoTNS, the accumulated error saturates after 20 sweeps. 
Additionally, we find that increasing $\chi$ decreases the error as expected, as the iMMs in the sweep can be done more accurately and thus have a less corrupting effect on the state.

\section{Transforming iMPS into 2D isoTNS}
\label{sec:iDMRG_to_isoTNS}

As the first application of the iMM algorithm, we show that we can transform an iMPS representing a 2D ground state into a 2D isoTNS with an approximation error controlled by the bond dimension.
We further investigate the entanglement properties of the states produced during the procedure.

\subsection{Peeling}
\label{sec:peeling}

To obtain an isoTNS approximating an iMPS, we use a peeling process, depicted in Fig.~\ref{fig:peeling}, analogous to that used for finite isoTNS in~\cite{zaletel2020isometric}.
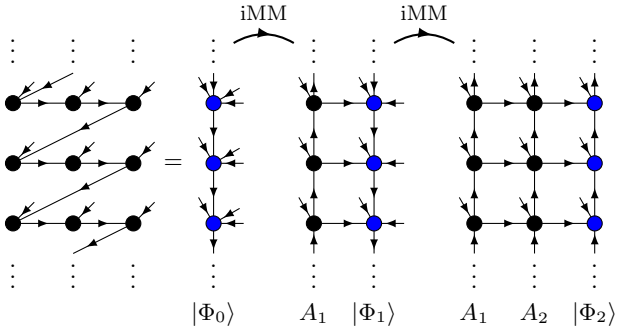
\begin{figure}[b!]
    \centering
    \newcommand{\LLy}{2}      
\newcommand{\LLx}{2}      
\newcommand{\ic}{0}
\newcommand{\jc}{10}
\renewcommand{\d}{0.8}   
\renewcommand{\r}{0.1} 
\renewcommand{\a}{0.6}   
\newcommand{\x}{0}
\newcommand{\y}{0}

\begin{tikzpicture}[baseline = (X.base),every node/.style={scale=1.0},scale=1.0]
		
\draw (\LLx*\d/2,\LLy*\d/2) node (X) {};

\foreach \j in {0,...,\LLy} {
    \foreach \i in {0,...,\LLx} {
        \renewcommand{\x}{\i*\d}
        \renewcommand{\y}{\j*\d}
        \draw [fill=black] (\x,\y) circle (\r);
        \ifthenelse{\i<\LLx}{
            \draw [midarrow={latex}](\x+\r,\y) -- (\x+\r+\a,\y);
        }{}
        \draw [midarrow={latex reversed}] (\x+\r*0.707,\y+\r*0.707) -- (\x+\d/2*0.707,\y+\d/2*0.707);
    }
    
    \ifthenelse{\j<\LLy}{
        \draw [midarrow={latex reversed}](0,\j*\d) -- (\LLx*\d,\j*\d+\d);
    }{}
}
\draw [beginarrow={latex reversed}](\LLx*\d/2,-\d/2) -- (\LLx*\d,0);
\draw [midarrow={latex reversed}](0,\LLy*\d)--(\LLx*\d/2, \LLy*\d + \d/2);

\foreach \i in {0,...,\LLx} {
    \renewcommand{\x}{\i*\d}
    \renewcommand{\y}{0*\d}
    \draw (\x,\y-\a)node{$\vdots$};
    \draw (\x,\y+\a+\LLy*\d+0.2)node{$\vdots$};
}


\draw (\LLx*\d + 2*\d/3, \LLy*\d/2) node {$=$};

\foreach \j in {0,...,\LLy} {
    \renewcommand{\x}{\LLx*\d + 4*\d/3}
    \renewcommand{\y}{\j*\d}
    \draw [fill=blue] (\x,\y) circle (\r);
    \draw [midarrow={latex reversed}] (\x+\r*0.866,\y+\r*0.5) -- (\x+\d/2*0.866,\y+\d/2*0.5);
    \draw [midarrow={latex reversed}] (\x-\r*0.5,\y+\r*0.866) -- (\x-\d/2*0.5,\y+\d/2*0.866);
    \draw [midarrow={latex reversed}] (\x+\r,\y) -- (\x+\d/2,\y);
    \ifthenelse{\j<\LLy}{
        \draw [midarrow={latex reversed}](\x,\y+\r) -- (\x,\y+\r+\a);
    }{
        \draw [midarrow={latex reversed}](\x,\y+\r) -- (\x,\y+\r+\a/2);
    }
}

		
\renewcommand{\x}{\LLx*\d + 4*\d/3}
\renewcommand{\y}{0*\d}
\draw [midarrow={latex reversed}](\x,\y-\r-\a/2) -- (\x,\y-\r); 
\draw (\x,\y-\a) node{$\vdots$}; 
\draw (\x,\y+\d*\LLy+\a+0.2) node{$\vdots$}; 

\draw (\x,-1.2)node{$\ket{\Phi_0}$};


\draw [midarrow={latex}, thick] (\LLx*\d + 5*\d/3,\LLy*\d + \d) to[out=30,in=150] (\LLx*\d + 3*\d - \d/3,\LLy*\d + \d);

\draw (\LLx*\d + 6.5/3*\d,\LLy*\d + 3*\d/2) node{\footnotesize iMM};

\foreach \i in {0,...,1} {
    \foreach \j in {0,...,\LLy} {
        \renewcommand{\x}{\LLx*\d + 3*\d+\d*\i}
        \renewcommand{\y}{\j*\d}
        \pgfmathsetmacro{\color}{ifthenelse(\i<1,"black", "blue")}
        \pgfmathsetmacro{\Vdir}{ifthenelse(\i<1,"latex", "latex reversed")}
        \draw [fill=\color] (\x,\y) circle (\r);
        \draw [midarrow={latex reversed}] (\x-\r*0.5,\y+\r*0.866) -- (\x-\d/2*0.5,\y+\d/2*0.866);
        \ifthenelse{\i<1}{
            \draw [midarrow={latex}] (\x+\r,\y) -- (\x+\d,\y);
        }{
            \draw [midarrow={latex reversed}] (\x+\r,\y) -- (\x+\d/2,\y);
        }
        \ifthenelse{\j<\LLy}{
            \draw [midarrow={\Vdir}](\x,\y+\r) -- (\x,\y+\r+\a);
        }{
            \draw [midarrow={\Vdir}](\x,\y+\r) -- (\x,\y+\r+\a/2);
        }
    }
}


\renewcommand{\x}{\LLx*\d + 3*\d}
\renewcommand{\y}{0*\d}

\draw [endarrow={latex}](\x,\y-\r-\a/2) -- (\x,\y-\r); 
\draw [midarrow={latex reversed}](\x+\d,\y-\r-\a/2) -- (\x+\d,\y-\r); 

\draw (\x,\y-\a) node{$\vdots$}; 
\draw (\x+\d,\y-\a) node{$\vdots$}; 

\draw (\x,\y+\d*\LLy+\a+0.2) node{$\vdots$};
\draw (\x+\d,\y+\d*\LLy+\a+0.2) node{$\vdots$};

\draw (\x,-1.2)node{$A_1$};
\draw (\x+\d,-1.2)node{$\ket{\Phi_1}$};


\draw [midarrow={latex}, thick] (\LLx*\d + 13*\d/3,\LLy*\d + \d) to[out=30,in=150] (\LLx*\d + 17*\d/3 - \d/3,\LLy*\d + \d);

\draw (\LLx*\d + 14.5/3*\d,\LLy*\d + 3*\d/2) node{\footnotesize iMM};

\foreach \i in {0,...,2} {
    \foreach \j in {0,...,\LLy} {
        \renewcommand{\x}{\LLx*\d + 17*\d/3+\d*\i}
        \renewcommand{\y}{\j*\d}
        \pgfmathsetmacro{\color}{ifthenelse(\i<2,"black", "blue")}
        \pgfmathsetmacro{\Vdir}{ifthenelse(\i<2,"latex", "latex")}
        \draw [fill=\color] (\x,\y) circle (\r);
        \draw [midarrow={latex reversed}] (\x-\r*0.5,\y+\r*0.866) -- (\x-\d/2*0.5,\y+\d/2*0.866);
        \ifthenelse{\i<2}{
            \draw [midarrow={latex}] (\x+\r,\y) -- (\x+\d,\y);
        }{}
        \ifthenelse{\j<\LLy}{
            \draw [midarrow={\Vdir}](\x,\y+\r) -- (\x,\y+\r+\a);
        }{
            \draw [endarrow={\Vdir}](\x,\y+\r) -- (\x,\y+\r+\a/2);
        }
    }
}


\renewcommand{\x}{\LLx*\d + 17*\d/3}
\renewcommand{\y}{0*\d}

\draw [endarrow={latex}](\x,\y-\r-\a/2) -- (\x,\y-\r); 
\draw [endarrow={latex}](\x+\d,\y-\r-\a/2) -- (\x+\d,\y-\r); 
\draw [endarrow={latex}](\x+2*\d,\y-\r-\a/2) -- (\x+2*\d,\y-\r);

\draw (\x,\y-\a) node{$\vdots$}; 
\draw (\x+\d,\y-\a) node{$\vdots$}; 
\draw (\x+2*\d,\y-\a) node{$\vdots$}; 

\draw (\x,\y+\d*\LLy+\a+0.2) node{$\vdots$};
\draw (\x+\d,\y+\d*\LLy+\a+0.2) node{$\vdots$};
\draw (\x+2*\d,\y+\d*\LLy+\a+0.2) node{$\vdots$};

\draw (\x,-1.2)node{$A_1$};
\draw (\x+\d,-1.2)node{$A_2$};
\draw (\x+2*\d,-1.2)node{$\ket{\Phi_2}$};

\end{tikzpicture}
    \caption{Peeling procedure to convert iMPS to isoTNS. 
    First the iMPS unit cell is collapsed to form an iMPS with $L_x$ physical legs per site. 
    Then iMM is applied $L_x-1$ times to strip off columns. Example shown is for $L_x=3$.}
    \label{fig:peeling}
\end{figure}
Starting from an iMPS found by iDMRG on an infinite strip of width $L_x$, we collapse the unit cell (a row) of $L_x$ sites to form an iMPS with $L_x$ physical legs per tensor.
We then repeatedly apply iMM to iteratively \textit{peel} isoTNO columns $A$ off of the iMPS, one physical site at a time. 
We use $\ket{\Phi_\ell}$ to denote the orthogonality column of the isoTNS after $\ell$ applications of iMMs:  
\begin{equation}
  \begin{split}
    \ket{\text{iMPS}} &= \ket{\Phi_0} \\
    &\approxeq A_1 \ket{\Phi_1} \approxeq A_1 A_2 \ket{\Phi_2} \\
    &\approxeq A_1A_2...A_{L_x-1} \ket{\Phi_{L_x-1}} =\ket{\text{isoTNS}}
\end{split}
\label{eq:Peeling}
\end{equation}

As an example of this peeling procedure, we peel an $L_x=6$ iMPS with $\chi=128$ representing ground state of TFI model with $g=3.5$. 
The error in fidelity density for a $\chi=6$ isoTNS produced by peeling is $2.4 \times 10^{-4}$, where the mixed transfer matrix $T_{\text{iMPS:isoTNS}}$ is formed by the iMPS and isoTNS.
Increasing to $\chi=10$ decreases the error to $7.3\times10^{-6}$, indicating that this conversion process can be done more accurately by increasing $\chi$.

\subsection{Area Law}
\label{sec:area_law}

For an area-law 2D ground state, one expects that the iMPS has an extensive amount of half-chain entanglement, i.e., on the order of $L_x$, before the peeling procedure.
In order for an isoTNS with finite bond dimension to represent this highly entangled iMPS with finite error density as $L_x \rightarrow \infty$, it is necessary that (1) the error in each iMM application does not increase with the number of iMM applied, and that (2) during the peeling process, the half-chain entanglement entropy has been decreased from $O(L_x)$ in $\ket{\Phi_0}$ to $O(1)$ in $\ket{\Phi_{L_x-1}}$. 
As $O(L_x)$ number of iMMs are applied in the process as in Eq.~\eqref{eq:Peeling}, we expect that the half-chain entanglement entropy of the $\ket{\Phi_\ell}$ is $O(L_x-\ell)$. 
We will see that this is indeed the case.

For an area-law state, we expect the half-chain bipartite entropy of an $L_x$-strip iMPS to be given by $S_{L_x} = \alpha(g) \cdot L_x + O(1)$, where $\alpha(g)$ is coupling dependent slope.
During the peeling process, we calculate the half-chain entropy of the iMPS $\ket{\Phi_{\ell}}$:
\begin{align}
    S(\ket{\Phi_{\ell}}) &= -\sum_i \sigma_i^2 \log \sigma_i^2,
    \label{eq:bipartiteS}
\end{align}
where $\sigma_i$ are the singular values on a vertical bond of $\ket{\Phi_\ell}$. 
This entropy is for the half-chain subsystem composed of the physical indices and the virtual indices of $\ket{\Phi_\ell}$ and is thus not a physical entropy.  
However, it controls the bond dimension needed for the orthogonality column $\ket{\Phi_\ell}$ and is thus a key quantity in the isoTNS representation.

In Fig.~\ref{fig:area_law}, we present the entropy when peeling off columns of an $L_x=8$ iMPS with $\chi=128$ representing ground state of the 2D TFI model at $g=3.50$ found by iDMRG.
We perform the iMM with $\chi=2$, so the first iMM is exact with $\chi_v=1$, and thus $A_1$ is a tensor product of on-site unitaries.
Remarkably, even though $S(\ket{\Phi_\ell})$ is not physical, after an initial delay, the iMM procedure removes essentially $\alpha(g)$ entanglement per iteration, where $\alpha(g)$ is the coupling-dependent physical entropy density in the entropy area-law equation.
We include a more detailed study of the entanglement structure in Appendix~\ref{appendix:area_law} and data from analogous experiments done at the critical coupling in Appendix~\ref{appendix:critical_numerics}. 

\begin{figure}[h]
\centering
\includegraphics[scale=1]{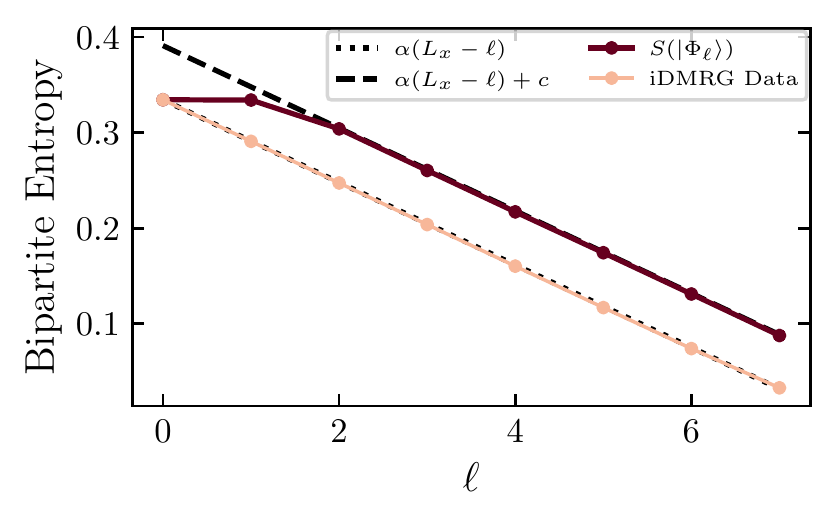}
\caption{
Entropy of $\ket{\Phi_\ell}$ as a function of number of columns $\ell$ removed by peeling the ground state of the 2D TFI Hamiltonian with width $L_x=8$ and $g=3.50$.
After an initial delay, iMM removes an amount of entanglement consistent with the area law.
}
\label{fig:area_law}
\end{figure}

\section{Evaluation of local observables: Energy as an example}
\label{sec:evaluation_energy}

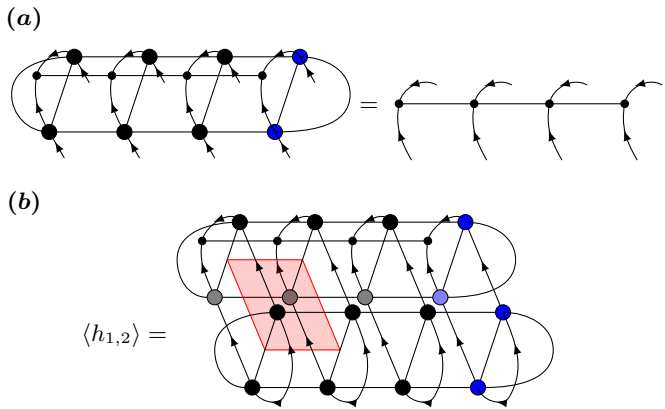
\begin{figure}[b]
    \centering
    \newcommand{\LL}{3}      
\renewcommand{\d}{1.0}   
\renewcommand{\r}{0.1} 
\renewcommand{\a}{0.8}   
\newcommand{\x}{0}
\newcommand{\y}{0}

\begin{tikzpicture}[baseline = (X.base),every node/.style={scale=1.0},scale=1.0]
\draw (0,0) node (X) {};

\draw (-\d/3, 3*\d/2) node{$\bm{(a)}$};

\foreach \i in {0,...,\LL}{
    \renewcommand{\x}{\i*\d}
    \draw (\x, 0) -- (\x+\d/3, \d);
    
    \foreach \j in {0,...,1}{
        \renewcommand{\x}{\i*\d+\j*\d/3}
        \renewcommand{\y}{\j*\d}
        
        \pgfmathsetmacro{\color}{ifthenelse(\i<\LL, "black", "blue")}
        \draw [fill=\color] (\x,\y) circle (\r);
        \ifthenelse{\i<\LL}{
            \draw [] (\x+\r, \y) -- (\x+\r+\a, \y);
        }{}
        
        \draw [midarrow=latex] (\x+\a/2*0.5, \y-\a/2*0.866) -- (\x,\y);
    }
}

\foreach \i in {0,...,\LL}{
    \renewcommand{\x}{\i*\d-\d/6}
    \renewcommand{\y}{3/4*\d}
    
    \draw [fill=black] (\x,\y) circle (\r/2);
    \ifthenelse{\i<\LL}{
        \draw (\x+\r/2, \y) -- (\x+3*\r/2+\a, \y);
    }{}
    
    \draw[midarrow=latex reversed] (\x, \y) to[out=80, in=120] (\x+\d/6+\d/3, \y-3/4*\d+\d);
    \draw[midarrow=latex reversed] (\x, \y) to[out=260, in=120] (\x+\d/6, \y-3/4*\d);
}

\draw [] (\LL*\d+\d, \d/2) to[out=90,in=0] (\LL*\d+\d/3, \d);
\draw [] (\LL*\d+\d, \d/2) to[out=270,in=0] (\LL*\d, 0);

\draw [] (-\d/2, \d/2) to[out=270,in=180] (0, 0);
\draw [] (-\d/2, \d/2) to[out=90,in=180] (\d/3, \d);

\draw (\LL*\d + 10*\d/8, \d/3) node{$=$};

\foreach \i in {0,...,\LL}{
    \renewcommand{\x}{\i*\d-\d/4 + \LL*\d + 1.9*\d}
    \renewcommand{\y}{3/8*\d}
    
    \draw [fill=black] (\x,\y) circle (\r/2);
    \ifthenelse{\i<\LL}{
        \draw (\x+\r/2, \y) -- (\x+3*\r/2+\a, \y);
    }{}
    
    \draw[midarrow=latex reversed] (\x, \y) to[out=80, in=150] (\x+\d/6+\d/3, \y-3/4*\d+\d);
    \draw[midarrow=latex reversed] (\x, \y) to[out=260, in=120] (\x+\d/6, \y-3/4*\d);
}


\pgfmathsetmacro{\ox}{2.2*\d}
\pgfmathsetmacro{\oy}{2.2*\d}

\draw (-\d/3, 5*\d/4-\oy) node{$\bm{(b)}$};

\foreach \i in {0,...,\LL}{
    \renewcommand{\x}{\i*\d-\d/6+\ox}
    \renewcommand{\y}{3/4*\d-\oy}
    
    \draw [fill=black] (\x,\y) circle (\r/2);
    \ifthenelse{\i<\LL}{
        \draw (\x+\r/2, \y) -- (\x+3*\r/2+\a, \y);
    }{}
    
    \draw[midarrow=latex reversed] (\x, \y) to[out=80, in=113] (\x+\d/6+\d/3, \y-3/4*\d+\d);
    \draw[midarrow=latex reversed] (\x, \y) to[out=260, in=113] (\x+\d/6-0.384*\r, \y-3/4*\d+0.923*\r);
}
\foreach \k in {0,...,1}{
    \foreach \i in {0,...,\LL}{
        \foreach \j in {0,...,1}{
            \renewcommand{\x}{\i*\d+\j*\d/3+\k*\d/2+\ox}
            \renewcommand{\y}{\j*\d-\oy-\k*\d*1.2}
            
            \ifthenelse{\j<1}{
                \draw (\x+0.316*\r, \y+0.948*\r) -- (\x+\d/3, \y+\d);
            }{}
            
            \pgfmathsetmacro{\color}{ifthenelse(\i<\LL, "black", "blue")}
            \pgfmathsetmacro{\opacity}{ifthenelse(\j<1, ifthenelse(\k<1 ,0.5, 1), 1)}
            \draw [fill=\color, fill opacity=\opacity] (\x,\y) circle (\r);
            \ifthenelse{\i<\LL}{
                \draw [] (\x+\r, \y) -- (\x+\r+\a, \y);
            }{}
            
            \ifthenelse{\k<1}{
                \draw [midarrow=latex] (\x+\d*0.5, \y-\d*1.2) -- (\x+0.384*\r,\y-0.923*\r);
                
                \ifthenelse{\j<1}{
                    \ifthenelse{\i<1}{
                        \pgfmathsetmacro{\os}{(\i+2.5)/5} 
                        \coordinate (A) at (\x+\d/3*\os,\y+\d*\os);
                        \coordinate (B) at (\x+\d/3*\os+\d,\y+\d*\os);
                        \coordinate (C) at (\x+\d/3*\os+\d/2,\y+\d*\os-\d*1.2);
                        \coordinate (D) at (\x+\d/3*\os+\d/2+\d,\y+\d*\os-\d*1.2);
                        \draw [draw=red, fill=red, fill opacity=0.2] 
                            (A) -- (B) -- (D) -- (C) -- cycle;
                   }{}
                }{}
            }{
                \ifthenelse{\j<1}{
                    \draw [beginarrow=latex] (\x+5*\d/12, \y-\d/8) to[out=240,in=280] (\x, \y);
                }{
                    \draw [midarrow=latex] (\x+\d/12, \y-9*\d/8) to[out=60,in=293] (\x, \y);
                }
            }
        }
    }
    
    \draw [] (\LL*\d+\d+\k*\d/2+\ox, \d/2-\k*\d*1.2-\oy) to[out=90,in=0] (\LL*\d+\d/3+\k*\d/2+\ox, \d-\k*\d*1.2-\oy);
    \draw [] (\LL*\d+\d+\k*\d/2+\ox, \d/2-\k*\d*1.2-\oy) to[out=270,in=0] (\LL*\d+\k*\d/2+\ox+\r, -\k*\d*1.2-\oy);
    
    \draw [] (-\d/2+\k*\d/2+\ox, \d/2-\k*\d*1.2-\oy) to[out=270,in=180] (\k*\d/2+\ox-\r, -\k*\d*1.2-\oy);
    \draw [] (-\d/2+\k*\d/2+\ox, \d/2-\k*\d*1.2-\oy) to[out=90,in=180] (\k*\d/2+\d/3+\ox, \d-\k*\d*1.2-\oy);
}


\draw (\ox-1.2*\d,-1*\oy-0.5*\d) node{$\langle h_{1,2} \rangle = $};

\end{tikzpicture}
    \caption{bMPO energy evaluation.
    \textbf{(a)} boundary MPO representing fixed point of isoTNS transfer matrix. 
    $L_x \times \infty$ strip is oriented so that infinite direction is vertical.
    Horizontal arrows are omitted for clarity.
    Fixed point only needs to be found against the direction of vertical arrows, as fixed point in the direction of arrows is the identity. 
    \textbf{(b)} Evaluation of $\langle h_{1,2}$ using bMPO fixed point and trivial fixed point. The energy of a row is found by $\sum_c \langle h_{c,c+1} \rangle$.}
    \label{fig:bMPO_energy}
\end{figure}

The evaluation of expectation values of local observables is crucial and non-trivial for 2D finite and infinite TNS.
As the second application of the iMM algorithm, we show that, up to some small error density, we can efficiently evaluate the expectation of local observables of the given strip isoTNS using iMM.
We benchmark the result by comparing to results obtained by approximate PEPS contractions using boundary MPO (bMPO) approach~\cite{lubasch2014unifying,ran2020tensor} and the exact contraction by collapsing the state into iMPS.
Here, we consider the evaluation of energy as an example, but the method can be applied to the evaluation of other local observables.

\subsection{Energy evaluated by iMM}
\label{sec:iMM_energy}

Suppose we wish to evaluate the energy of a state given a Hamiltonian $H = \sum_c H_{c,c+1}$ where $H_{c,c+1}$ acting on columns $c$ and $c+1$ is composed of a translationally invariant local operator $h_{c,c+1}$. 
The expectation value of $H_{c,c+1}$ can be found by evaluating the energy of each local operator $h_{c,c+1}$ with an isoTNS with doubled column $c,c+1$ as the orthogonality column.
This is because calculating expectation values of operators contained entirely in the orthogonality column reduces to an efficient 1D iMPS problem.
If we could move the orthogonality column freely around the strip in an exact fashion, this would give us an exact method for evaluating the energy.
However, as moving the orthogonality column requires the iMM, this method inherently incurs an approximation error on the order of the iMM errors.
As demonstrated in Sec.~\ref{sec:repeated_iMM}, the state represented by an isoTNS is not significantly affected by iMM applications, where the individual errors of each iMM application were presented in Table~\ref{tab:single}.
Thus we can perform a full sweep of iMM iterations over the entire strip and use the two-site orthogonality center of each doubled column to evaluate the expectation value of $h_{c,c+1}$.
Doing this for each two-column term gives us the energy of one row, $\langle E_{\mathrm{row}}\rangle = \sum_{c} \langle h_{c,c+1} \rangle$, which can be converted to a per-site energy by dividing by strip width.

Note that the maximum bond dimension $\chi' = \chi'_h = \chi'_v$, and $\eta'$ of the iMM used to calculate observables does not need to be the same as $\chi$ and $\eta$, that of the original isoTNS.
The accuracy of calculated observables increases as we increase $\chi'$, as this improves the accuracy of iMM.
This method for calculating observables scales linearly with the strip width and requires $L_x-1$ applications of iMM, which as stated previously has a computational cost of $\mathcal{O}(\chi'^4\eta'^3)$.

\subsection{Energy evaluated by boundary MPO}
To check the energy evaluation from the iMM independently, we compute the energy of an isoTNS strip without using the iMM.
To do this, we consider the isoTNS row transfer matrix and find its fixed point. 
We can compute local observables efficiently given the fixed points of the row transfer matrix.
However, as the dimension of the transfer matrix grows exponentially in the width $L_x$, we approximate the fixed point ``vector'' by the boundary MPO (bMPO)~\cite{verstraete2004renormalization} as depicted in Fig.~\ref{fig:bMPO_energy}(a).
Note that the strip is oriented such that the infinite direction is vertical.
Combining the bMPO contraction methods developed for finite TNS~\cite{verstraete2004renormalization,lubasch2014unifying,lubasch2014algorithms} and the power method, we find the dominant eigenvector of the transfer matrix represented by a bMPO with bond dimension $D_{\mathrm{bMPO}}$.

We only have to converge the fixed point \textit{against} the direction of the vertical isometric arrows (from the top down in Fig.~\ref{fig:bMPO_energy}(b)).
This is because the fixed point along the direction of the arrows is, by definition, an identity operator over each column.
The non-trivial fixed point \textit{against} the isometric arrow direction admits a spectral decomposition $U \rho U^\dagger$, where $\rho$ is diagonal and positive definite, encoding the square of the Schmidt values. However, we do not utilize this property but use a bMPO directly to parametrize the fixed point vector.
With this bMPO representing the fixed point of the isoTNS row transfer matrix, we calculate the energy of a row by evaluating $\langle\sum_{c} h_{c,c+1} \rangle = \sum_c \langle h_{c,c+1} \rangle$.
This is done by contracting the network shown in Fig.~\ref{fig:bMPO_energy}(b) for each two-column local operator $h_{c,c+1}$, shown here to act on two neighboring rows.

This method can be made arbitrarily accurate by increasing the bond dimension $D_{\mathrm{bMPO}}$, but we note that this method is very costly as strip width $L_x$ and the bond dimension $\chi$ of the isoTNS grows.
Calculating the fixed point scales as $\mathcal{O}\left( N_\text{iter}L_x (\chi^4 D_{\mathrm{bMPO}}^3 + d\chi^6 D_{\mathrm{bMPO}}^2) \right)$, where $d$ is the local Hilbert space dimension and $\chi$ is the dimension of all virtual legs in the isoTNS.
Typically $D_{\mathrm{bMPO}} \sim \chi^2$, so this method scales as $\mathcal{O}(\chi^{10})$.
%
The $N_\text{iter}$ is the number of the transfer matrix-vector multiplications required for convergence which is related to the gap in the transfer matrix.

\subsection{Energy Benchmarks}

The exact but most computationally intensive $\mathcal{O}(\exp(L_x))$  method is to find the exact fixed point and perform an exact contraction.
To this end, we collapse each row of an isoTNS to form an iMPS with physical dimension $d^{L_x}$, representing $L_x$ physical sites per tensor. 
Then, standard MPS methods can be used to find the exact fixed point and evaluate the energy.
We use this essentially exact method only to benchmark the previous two methods.

The benchmark result is shown in Fig.~\ref{fig:energy_estimate}.
We apply both methods to a $L_x=8$, $\chi=4$ isoTNS produced by peeling of a 2D TFI GS iMPS, as described in Sec.~\ref{sec:iDMRG_to_isoTNS}.
We find that the bMPO method is essentially exact for large enough $D_{\mathrm{bMPO}}$, while the accuracy of the iMM energy increases with both $\chi'$ and $\eta'$, as the splitting can be done more accurately with the larger space of available tensors.
From this, we note that the iMM energy tends to underestimate the true energy; yet for large strip widths and large $\chi$, other methods are infeasible due to computational costs.

\begin{figure}[htb]
    \centering
    \includegraphics[width=\columnwidth]{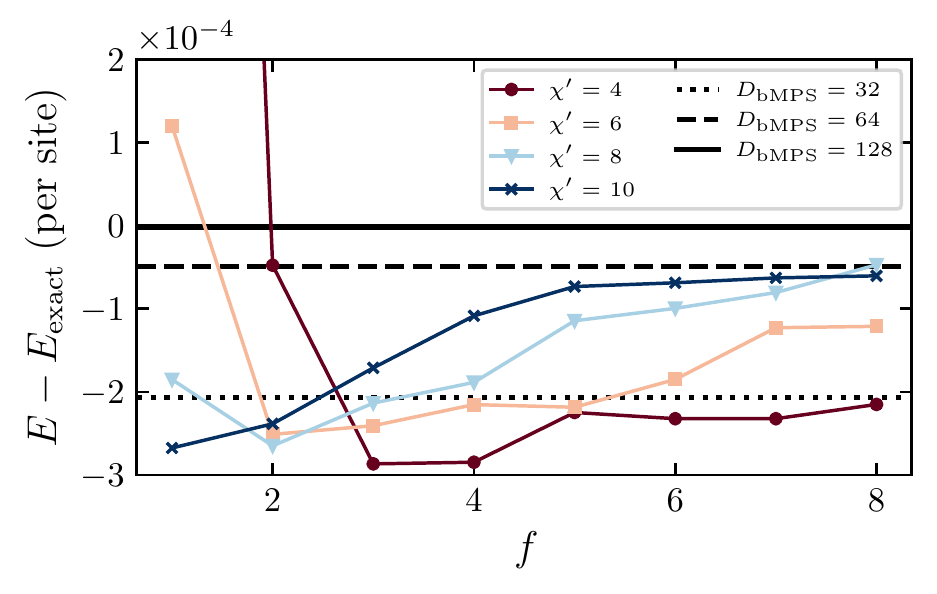}
    \caption{
    Energy of the $L_x=8$, $\chi=4$ isoTNS produced by peeling evaluated with the iMM and bMPO methods.
    We compare the energies to the exact energy evaluated by exact iMPS contraction.
    The energy from the bMPO method is close to exact for large $D_{\mathrm{bMPO}}$ but is very costly as $D_{\mathrm{bMPO}}$ grows due to complexity of finding the fixed point.
    The accuracy of iMM energy increases with both $\eta'$ and $\chi'$ used in iMM. 
    In the figure, we use $f \equiv \eta'/\chi'$.}
    \label{fig:energy_estimate}
\end{figure}

\section{iTEBD\textsuperscript{2}}
\label{sec:iTEBD2}

We now introduce the TEBD-based time evolution algorithm for infinite strip isoTNS, dubbed iTEBD\textsuperscript{2}.
We then demonstrate the algorithm by performing imaginary time evolution to find the ground states of the two dimensional transverse field Ising model.

\begin{figure*}[htb]
    \centering
    \input{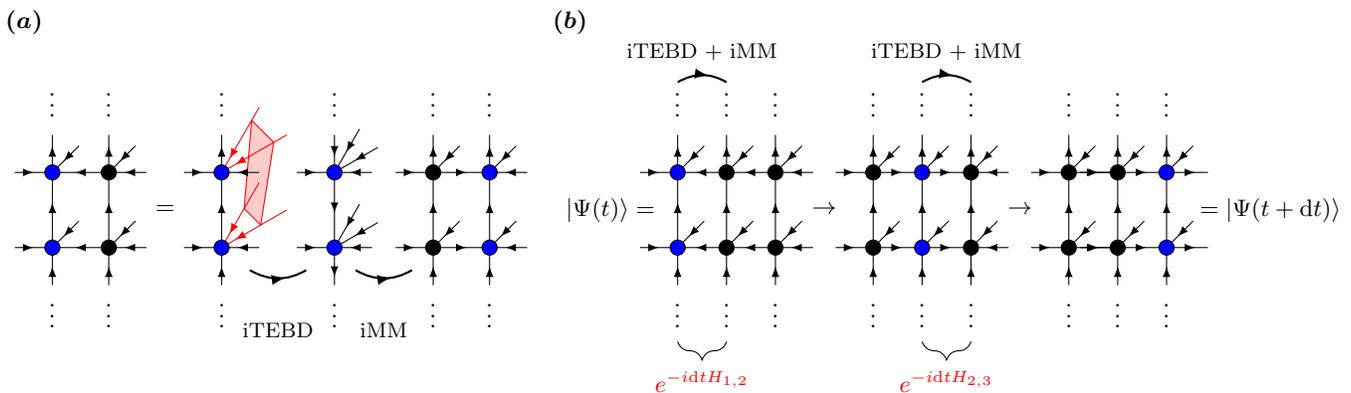}
    \caption{iTEBD\textsuperscript{2} Algorithm. 
    \textbf{(a)} Subroutine acting on two columns, the left of which is the orthogonality column. 
    First the two columns are fused. 1D iTEBD is applied to the physical legs (colored red) of the double column using gates $e^{-i\mathrm{d}t h_{c,c+1}}$, acting on a plaquette of four physical sites; note that the gate does not act on any virtual legs.
    iMM is then applied to split the doubled column and move the orthogonality center to the right. 
    \textbf{(b)} The iTEBD\textsuperscript{2} algorithm involves $L_x-1$ iterations of the subroutines described in (a) to implement a total Trotterized time evolution step of $\mathrm{d}t$. 
    Subsequent iTEBD\textsuperscript{2} sweeps alternate sweep directions, left-to-right and vice-versa.}
    \label{fig:iTEBD2}
\end{figure*}

\subsection{iTEBD\textsuperscript{2} Algorithm}
\label{sec:iTEBD2_algorithm}

TEBD-like algorithms perform time evolution by approximating it as successive local time evolutions via the Suzuki-Trotter decomposition of a Hamiltonian, i.e., the sum of local terms:
\begin{align}
    \ket{\Psi(t+\mathrm{d}t)} &= e^{-iH\mathrm{d}t }\ket{\Psi(t)}\\
    &\approx \prod_j e^{-ih_j \mathrm{d}t}\ket{\Psi(t)}.
\end{align}
After each local time evolution operation $e^{-ih_j\mathrm{d}t}$, we find the closest state within the ansatz manifold to represent the time-evolved state.
This general approach leads to tensor network implementations as the TEBD algorithm for finite MPS~\cite{vidal2003efficient,vidal2004efficient}, the iTEBD algorithm for iMPS~\cite{vidal2007classical,orus2008iTEBD}, the TEBD\textsuperscript{2} algorithm for 2D isoTNS~\cite{zaletel2020isometric}, and the simple and full update for TNS time evolution~\cite{jiang2008simple,jordan2008full,phien2015infinite}. 
Here we focus on the application of iTEBD\textsuperscript{2} to strip isoTNS.

For an width $L_x$ strip, we write the Hamiltonian as
\begin{align}
    H&=\sum_{c=1}^{L_x-1} H_{c,c+1},
\end{align}
where each $H_{c,c+1}$ is the infinite collection of local operators acting on columns $c$ and $c+1$.
Each infinite two-column operator is the sum of local terms, and we will assume the form
\begin{align}
    H_{c,c+1}&=\sum_{i} h^{i,i+1}_{c,c+1},
\end{align}
which acts on a plaquette of four spins on columns $c$ and $c+1$ and rows $i$ and $i+1$.
We work with models where the local term $h^{i,i+1}_{c,c+1}$ is translationally invariant in the vertical direction, so we will drop the row superscripts.

To perform time evolution, we Trotterize the full Hamiltonian according to
\begin{align}
    e^{-i\mathrm{d}t H} =  e^{-iH_{1,2}\mathrm{d}t} e^{-iH_{2,3}\mathrm{d}t} \ldots e^{-iH_{L_x-1,L_x} \mathrm{d}t} 
\end{align}
which is a first-order splitting of the column operators $H_{c,c+1}$. 
We then perform a first-order splitting of the individual plaquette terms $h_{c,c+1}$ within the column operators. 
The iTEBD\textsuperscript{2} algorithm on an infinite strip using a Hamiltonian of this form is depicted graphically in Fig.~\ref{fig:iTEBD2}.
Our initial isoTNS has the orthogonality column with arrows pointing up as the left-most column.
As shown in Fig.~\ref{fig:iTEBD2}(a), we first merge columns $1$ and $2$ to form a doubled column with two physical sites per tensor.
We then apply the time evolution operator $e^{-i\mathrm{d}t H_{1,2}/2}\approx\prod e^{-i\mathrm{d}t h_{1,2}/2}$ to only the physical legs of the doubled column, which flips the isometric arrows to point down.
We note that we do not use the standard iTEBD algorithm that enforces at least a two-site unit cell along the column~\cite{vidal2007classical}, as we do not wish to have a non-trivial unit cell. 
Instead we simply apply the two-site gates and do SVD truncation with gauge-fixing to sweep downward until convergence.
Following this 1D iTEBD on a doubled column, we apply the chosen iMM algorithm to split this doubled column into an isoTNO $A_1$ and a new orthogonality column $\ket{\Phi_2}$, both of which have isometry arrows pointing up and a single physical leg. 
We can now repeat this process with columns $2$ and $3$.
Proceeding in this way from left (right) to right (left) on odd (even) iterations of iTEBD\textsuperscript{2}, we perform one time evolution step $dt$ using $L_x-1$ applications of the iMM and 1D iTEBD; this procedure is summarized in Fig.~\ref{fig:iTEBD2}(b).

Both of these subroutines utilize deterministic SVDs and have complexity $\mathcal{O}(\chi^4\eta^3)$~\footnote{
After merging two columns, the merged column would have bond dimension $\eta\chi$.
Optionally, a compression can be performed after merging two columns and before the iTEBD. 
In this work, we compress the bond dimension from $\eta\chi$ down to $\eta$ before performing the iTEBD. 
If a compression to bond dimension $\eta$ is performed before the iTEBD, the complexity is $\mathcal{O}(\chi^4 \eta^3)$; otherwise, the complexity is slightly higher, $\mathcal{O}(\chi^5\eta^3)$. 
In practice, we observe a slight deterioration in the result with compression. But we can obtain better results overall by using larger bond dimensions within the same run-time. 
}, where we assume that $\chi_h=\chi_v=\chi$, and allow the bond dimension $\eta$ along the orthogonality column to be different from $\chi$. 
Hence we see that increasing the strip width incurs a linear increase in the cost of algorithms for systems obeying area law, compared to an exponential increase in costs for 1D algorithms applied to 2D strips.

\subsection{Ground state search}
\begin{figure*}[htb]
    \centering
    \includegraphics[scale=1.0,width=\textwidth]{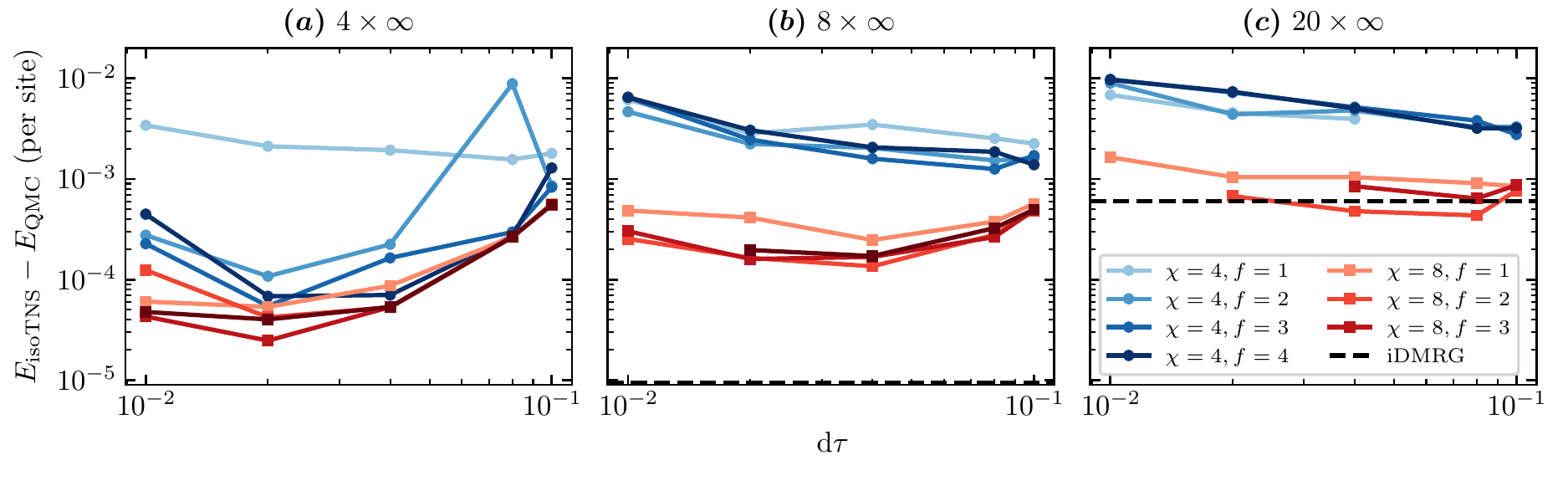}
    \caption{
    Ground state energies achieved with iTEBD$^2$ for paramagnetic 2D TFI with $g=3.50$ on an $L_x=4, 8,$ and $20$ strip.
    We compare the isoTNS energies against essentially exact energies from quantum Monte Carlo (QMC) extrapolated from strips of finite length.
    As a comparison, the dashed line is the result of an iDMRG calculation with bond dimension $\chi=512$. For $L_x=4$, the iDMRG result is below the bottom axis of the plot.
    }
    \label{fig:iTEBD_energy}
\end{figure*}
With the iTEBD\textsuperscript{2} algorithm, we can perform imaginary time evolution to find the ground state of Hamiltonian starting from an initial state $\ket{\Psi_0}$:
\begin{align}
    \ket{\Psi_\mathrm{GS}} &= \lim_{\tau \rightarrow \infty} \frac{e^{-\tau H} \ket{\Psi_0}}{\norm{e^{-\tau H} \ket{\Psi_0}}}.
\end{align}

We benchmark the algorithm with 2D TFI model on strips of width $L_x=4, 8, 20$ using isoTNS of bond dimensions $\chi=4, 8$ with $D_b=\chi$.
We investigate a range of $f \equiv \eta/\chi$ values, where again $f$ controls the bond dimension of the intermediate $\ket{\Phi}$ columns produced by iMM during the iTEBD\textsuperscript{2} sweeps.
As an essentially exact benchmark, we compute the energy via quantum Monte Carlo (QMC) with the ALPS library~\cite{bauer2011alps} on strips of both finite width and length; we find the energy of strips of increasing length and extrapolate to infinite strips.
The results are presented in Fig.~\ref{fig:iTEBD_energy}. Here, the energy of each isoTNS is calculated by the iMM method discussed in Sec. \ref{sec:iMM_energy}, using $\chi=10$ and $f=4$ to give an accurate energy estimate.

If iMM were exact, then the energy would decrease monotonically with time step $\mathrm{d}\tau$. 
Yet we clearly see that there exists an energy minimum at intermediate $\mathrm{d}\tau$.
There are competing effects between less Trotter error from a smaller time step but then more error accumulated from an increased number of iTEBD and iMM iterations~\cite{zaletel2020isometric,lin2022efficient}.
Additionally, the largest $f$ does not provide the lowest energy, as one would naively suspect.
Larger $f$ leads to increased vertical bond dimensions during the sweep, leading to larger truncation errors during the iTEBD on the doubled column.

\section{Conclusions}
\label{sec:conclusion}

In this work, we have extended isometric tensor networks on a square lattice to infinite strip geometries and introduced algorithms to both manipulate and time-evolve the ansatz.
We introduced four different infinite Moses Move algorithms and found the combination of \iMML\ and \iMMP\ to be efficient and stable.
The isoTNOs produced by iMM have a significant disentangling effect on a two-sided MPS.
We verify this effect by showing that iMM can remove an amount of entanglement consistent with the area law of the underlying phase at each iteration, similar to MM~\cite{zaletel2020isometric}.
We demonstrated three different applications based on iMM algorithms: (i) transforming an iMPS into an 2D isoTNS, (ii) the evaluation of local observables, and (iii) the iTEBD\textsuperscript{2} algorithm, which enables ground state optimization via imaginary time evolution. 
These results demonstrate that the isoTNS is a 2D network ansatz permitting both efficient optimization and calculations, and we expect the benefits of this method to become apparent as strip width increases beyond the reach of 1D methods.

We conclude by commenting interesting applications of the our work. 
Of foremost interest is the extension of our methods to the simulation of strongly correlated many-body systems that are infinite in both directions.
Having now dealt with infinite columns through the methods we've introduced, a remaining challenge is to find a fixed point solution to the splitting problem $A \Psi = \Psi B$, where $A$ ($B$) is an infinite isometric column with horizontal arrows pointing right (left) and $\Psi$ is an infinite orthogonality column.
While this is solved in 1D by iterative gauge-fixed QR decompositions, the iMM algorithms as currently formulated do not fix the gauge on the horizontal legs, so there is no a priori reason that repeatedly applying the algorithm will converge.
Additionally, many interesting physical systems display spontaneous translational symmetry breaking.
While in our current prescription, the tensors in a row can differ from one another, each row is repeated along the vertical direction.
To allow for a non-trivial unit cell in the infinite direction, we must generalize the iMM methods discussed in Sec.~\ref{sec:iMM} to multi-site unit cells.
Such modifications are simple extensions of the iMM algorithms introduced earlier.
We leave the generalization of iMM to infinite width and exploration of non-trivial unit cells as future works.

A second application is motivated by the use of tensor networks as state preparing circuits on quantum computers~\cite{schon2005sequential,banuls2008sequentially,foss-feig2021holographic}.
This relies on the network having an isometric structure so that tensors can be interpreted as unitaries acting on qubits and that there is a unidirectional flow of time opposite to the isometry directions.
In general, the isometric tensor of bond dimension $\chi$ would translate into a gate acting across $\log{\chi}$ qubits.
To really construct the circuits that could run on quantum computers, we have to further decompose such ``dense'' unitaries into quantum circuits consisting of two-site gate~\cite{haghshenas2022variational}, resulting in the so-called quantum circuit tensor network.
Quantum circuits of finite 2D isoTNS~\cite{slattery2021quantum,wei2022sequential} and infinite 1D isoTNS~\cite{Barratt2021,Dborin2022,astrakhantsev2022time} have been numerically and analytically explored for this purpose.
Adapting changes mentioned above, the infinite strip networks developed here can be prepared on a quantum computer, allowing for calculation of expectation values by directly measuring the state without expensive and approximate boundary contraction methods.
Additionally, the finite MM algorithm was recently used to prepare isometric circuits encoding entanglement renormalization principles to accurately measure long-range correlations in critical quantum chains~\cite{anand2022holographic}.
The iMM algorithm developed here can be used to extend this work to the thermodynamic limit.

\begin{acknowledgments}
    YW was supported by the RIKEN Interdisciplinary Theoretical and Mathematical Sciences Program.
	SA and MZ were supported by the U.S. Department of Energy, Office of Science, Basic Energy Sciences, under Early Career Award No. DE-SC0022716. 
    F.P. acknowledges the support of the Deutsche Forschungsgemeinschaft (DFG, German Research Foundation) under Germany's Excellence Strategy EXC-2111-390814868. 
	S.L. and F.P. were supported by the DFG TRR80 and the Bavarian state government with funds from the Hightech Agenda Bayern Plus.
    This work was supported by the European Research Council (ERC) under the European Union's Horizon 2020 research and innovation program (Grant Agreement No. 771537). %
	The research is part of the Munich Quantum Valley, which is supported by the Bavarian state government with funds from the Hightech Agenda Bayern Plus.  
	Computing resources were provided by National Energy Research Scientific Computing Center (NERSC), a U.S. Department of Energy Office of Science User Facility located at Lawrence Berkeley National Laboratory, operated under Contract No. DE-AC02-05CH11231 using NERSC Award No. BES-ERCAP0020043.

\textbf{Data and materials availability} – Data analysis and simulation codes are available upon reasonable request.
\end{acknowledgments}

\appendix
\section{Proof of Eq.~\eqref{eq:error_decomposition}}
\label{sec:error_decomp}

Recall $A$ is an isometry such that $A^\dagger A = \mathbbm{1}$ and $A A^\dagger = \mathcal{P}$.
As a result, we have the following identity:
\begin{align}
    \label{eq:error_identity}
    \norm{\ket{\Psi} - A\ket{\Phi}}^2 =& \norm{\ket{\Psi} - AA^\dag \ket{\Psi}}^2 \nonumber \\
    &+ \norm{A^\dag \ket{\Psi} - \ket{\Phi}}^2.
\end{align}
This identity introduces the unnormalized intermediate states $A^\dag \ket{\Psi}$ and invites an interpretation of the MM error as the sum of a projection error and an MPS truncation error, which are respectively the first term and the second term in Eq.~\eqref{eq:error_identity}.

As we prove below, for a uniform finite system of size $L$, to the first order in the errors, both errors are proportional to $L$:  
\begin{align}
    \label{eq:errors2}
    \begin{split}
        \norm{\ket{\Psi} - AA^\dag \ket{\Psi}}^2 
        \approx [1 - \lambda_1(T_{A^\dag\Psi:A^\dag\Psi})]L \\
        \norm{A^\dag \ket{\Psi} - \ket{\Phi}}^2 
        \approx [1 - \left( \lambda_1(T_{\widetilde{A^\dag\Psi}:\Phi}) \right)^2]L,
    \end{split}
\end{align}
where $\widetilde{A^\dag \ket{\Psi}} \propto A^\dag \ket{\Psi}$ is the normalized state \footnote{Note that it is wrong to use $T_{A^\dag \Psi:\Lambda}$ in Eq.~\eqref{eq:errors2} as the states making up the transfer matrix are not properly normalized}.
Recall the definitions as in Eq.~\eqref{eq:error_defs}:
\begin{align*}
    \epsilon_p &\equiv 1 - \lambda_1(T_{A^\dag\Psi:A^\dag\Psi}), \nonumber\\
    \epsilon_t &\equiv 1 - \left( \lambda_1(T_{\widetilde{A^\dag\Psi}:\Phi}) \right)^2.
\end{align*}
Evidently, the total error density in Eq.~\eqref{eq:total_error} can be decomposed as
\begin{align}
  \epsilon &= \epsilon_p + \epsilon_t + O(\epsilon_p^2, \epsilon_t^2, \epsilon_p \epsilon_t),
\end{align}
which is Eq.~\eqref{eq:error_decomposition}.

To complete this appendix, we now give the derivation of Eq.~\eqref{eq:errors2}.
The first term in Eq.~\eqref{eq:error_identity} is due to the projection $AA^\dag$: 
\begin{align}
  \norm{\ket{\Psi}-AA^\dag \ket{\Psi}}^2 &= 1 - \norm{A^\dag \ket{\Psi}}^2 \nonumber\\
  &\equiv 1-(1-\epsilon_p)^L \approx \epsilon_p L.
\end{align}
The second source of $\epsilon$ comes from the truncation error of representing the unnormalized state $A^\dag \ket{\Psi}$ with the MPS $\ket{\Phi}$: 
\begin{equation}
  \begin{split}
  &\norm{A^\dag \ket{\Psi} - \ket{\Phi}}^2 = 1 +  \norm{A^\dag \ket{\Psi}}^2 - 2 \Re \braket{\Phi|A^\dag|\Psi}
  \\
  &\hspace{5mm}= 2 - \epsilon_p L  - 2 \Re \bra{\Phi} \frac{A^\dag\ket{\Psi}}{\left(\sqrt{1-\epsilon_p}\right)^L} \left(\sqrt{1-\epsilon_p}\right)^L
\\
&\hspace{5mm}\equiv 2 - \epsilon_p L  - 2 \Re \bra{\Phi} \widetilde{A^\dag\ket{\Psi}} (\sqrt{1-\epsilon_p})^L,
\end{split}
\end{equation}
where we defined $\widetilde{A^\dag \ket{\Psi}} = A^\dag \ket{\Psi}/{\left(\sqrt{1-\epsilon_p}\right)^L}$ as the normalized state.  
Analogously to Eq.~\eqref{eq:epsilon},  
\begin{equation}
   \norm{\ket{\Phi} - \widetilde{A^\dag \ket{\Psi}}}^2 = 2- 2 \Re \bra{\Phi} \widetilde{A^\dag\ket{\Psi}} \approx \epsilon_t L.
\end{equation} 
Thus, putting everything together, we have
\begin{align}
  \norm{A^\dag \ket{\Psi} - \ket{\Phi}}^2 &\approx 2-\epsilon_p L - (2-\epsilon_t L)\left(1-\frac{\epsilon_p L}{2}\right)\nonumber \\
  &= \epsilon_t L.
\end{align}

\section{Isometric filling of $A$}
\label{sec:isofill}

Let $a$ denote the tensor making up $A$. 
Label its indices as below:
\begin{equation}
    \label{eq:bond_dims}
    \begin{tikzpicture}[baseline = (X.base),every node/.style={scale=1.0},scale=1.0]
\renewcommand{\d}{1.0}   
\renewcommand{\r}{0.25}  
\renewcommand{\a}{0.5}   
\newcommand{\dH}{1.0}   
\newcommand{\aH}{0.5}
\newcommand{\x}{0}
\newcommand{\y}{0}
\draw (0, 0) node (X) {};
\draw (\x,\y) circle (\r);
\draw (\x,\y) node {$a$};
\draw (\x - \r - 0.75,\y) node {$i_0$};
\draw (\x + \r + 0.75,\y) node {$i_1$.};
\draw (\x, \y - \r - 0.75) node {$i_2$};
\draw (\x, \y + \r + 0.75) node {$i_3$};
\draw [midarrow={latex}](\x+\r,\y) -- (\x+\r+\aH,\y); 
\draw [midarrow={latex}](\x-\r-\aH,\y) -- (\x-\r,\y); 
\draw [midarrow={latex}](\x,\y+\r) -- (\x,\y+\r+\a);
\draw [midarrow={latex reversed}](\x,\y-\r) -- (\x,\y-\r-\a);
\end{tikzpicture}.
\end{equation}
To enlarge the bond dimension $\chi_v = \dim(i_2) = \dim(i_3)$ to $\chi'_v$ while keeping the operator that $A$ represents invariant and the isometric condition intact, one groups $i_2i_0$ and $i_3i_1$ respectively as the row and column index of the isometric matrix $a$. 
$i_2$ and $i_3$ are the ``slow" index of their respective combined indices. 
Here we assume $\dim(i_0) = \dim(i_1)$, and thus $a$ is square. 
To enlarge the bond dimensions, one first zero-pads on the index $i_2$ and then adds orthogonal columns on index $i_3$:
\begin{equation}
 [a] \rightarrow \begin{bmatrix} a \\ 0 \end{bmatrix} \rightarrow \begin{bmatrix} a & 0 \\ 0 & a^\bot \end{bmatrix},
\end{equation}
where $a^\bot$ is an arbitrary unitary matrix with $(\chi'_v - \chi_v)\dim(i_0)$ number of rows and columns. 
Thus, when $\dim(i_0) = \dim(i_1)$, the result of isometric filling is $a \rightarrow a' = a \oplus a^\bot$.

\section{Area law in Pealing iMPS}
\label{appendix:area_law}

As discussed in Sec.~\ref{sec:disentangling_iMM}, iMM has a disentangling effect on the new iMPS $\ket{\Phi_{\ell}}$ and we have shown by computing the half-chain entropy of the iMPS $\ket{\Phi_{\ell}}$ over the peeling process:
\begin{align*}
    S(\ket{\Phi_{\ell}}) &= -\sum_i \sigma_i^2 \log \sigma_i^2,
\end{align*}
where $\sigma_i$ are the singular values on a vertical bond of $\ket{\Phi_\ell}$; see Fig.~\ref{fig:area_law_appendix}(a).
This entropy is for the half-chain subsystem composed of the physical indices and the virtual indices of $\ket{\Phi_\ell}$ and is thus not physical.

To dig deeper, we ask whether iMM can  extract $\alpha(g)$ using only physical quantities~\footnote{Here by physical quantities, we mean any quantity obtainable in principle from the initial wavefunction $\ket{\Phi_0}$}. 
The answer is yes.
For each $\ket{\Phi_\ell}$, a further iMM can be applied to $\ket{\Phi_\ell}$ with $\chi_h = 1$:  
\begin{equation}
  \ket{\Phi_\ell} \approx A'_\ell \ket{\Phi_\ell'},
\end{equation}
where the isoTNO $A_\ell'$ has no physical legs. 
The original $\ket{\Phi_l}$ has virtual bonds and can be viewed as the purification state of the density matrix over the remaining physical spins. 
Now, the $\ket{\Phi_\ell'}$ is a pure state and the dominant eigenstate of such a density matrix.
Remarkably, we find that $\ket{\Phi_\ell'}$ has an error density on the order of $10^{-5}$ when compared to the iMPS representing width $L_x-\ell$ TFI ground state and thus has an $S(\ket{\Phi'_\ell})$ that obeys the entropy area law almost perfectly while also agreeing with the values found via iDMRG (see blue crosses Fig.~\ref{fig:area_law_appendix}(b)). 
This suggests that the TFI Hamiltonian of width $L$ and the entanglement Hamiltonian of a subsystem of width $L$ in a larger TFI system have approximately the same ground state.
We repeat this experiment at the critical transverse field $g=g_C^{2D}$ and find the same behavior. Results are shown in Appendix~\ref{appendix:critical_numerics}.

\begin{figure}[h]
\centering
\input{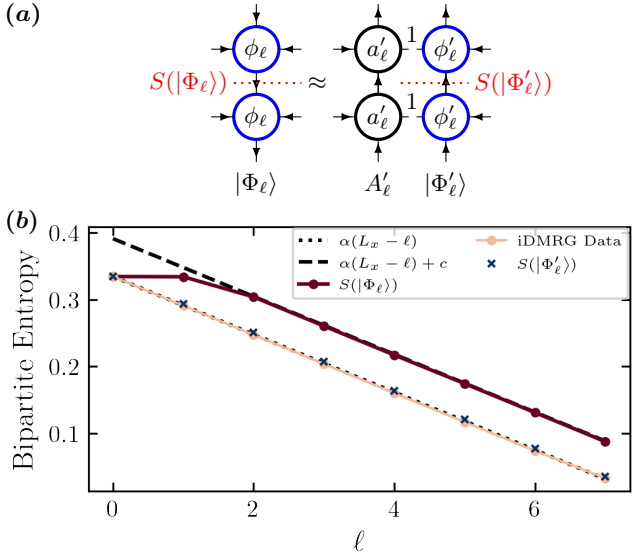}
\caption{
Area law properties of peeled isoTNS.
\textbf{(a)} $S(\ket{\Phi_\ell})$ is the half-chain entropy of $\ket{\Phi_\ell}$, including contributions from left virtual legs.
$S(\ket{\Phi_\ell'})$ is a physical entropy between physical legs above and below the bipartition, found by projecting out the left virtual legs via iMM. 
\textbf{(b)} Entropy of $\ket{\Phi_\ell}$ as a function of number of columns $\ell$ removed by peeling the ground state of the 2D TFI Hamiltonian with width $L_x=8$ and $g=3.50$.
After an initial delay, iMM removes an amount of entanglement consistent with the area law.
Physical entropy $S(\ket{\Phi_\ell'})$ and entropy from iDMRG GS of different width strips agree and obey area law.
}
\label{fig:area_law_appendix}
\end{figure}

\section{Numerical results for $g=3.04438$}
\label{appendix:critical_numerics}

Here we present numerics at $g=g_C^{2D}=3.04438$, the critical transverse field for the two-dimensional TFI in the thermodynamic limit. 
Close to the critical point, we expect this model on finite width strips to be more difficult to capture by an iMPS due to increased entanglement.

First we repeat the area law experiment of Sec.~\ref{sec:area_law} and show the results in Fig.~\ref{fig:area_law_g3.04438}.
We again see that the iMM algorithm removes an amount of entanglement $\alpha(g)$ per column consistent with the area law.
Additionally, the physical entropy $S(\ket{\Phi_\ell'})$ agrees with the entropy from iDMRG, again indicating that the orthogonality column of $L_x - \ell$ columns and the iMPS of the same width have large overlap. 
We note that for $g=3.50$, the amount of entanglement removed per column saturates to $\alpha(g)$ sooner than in the $g=304438$ case, indicating that more horizontal entropy is present in latter case.

\begin{figure}[htb]
    \centering
    \includegraphics[width=\columnwidth]{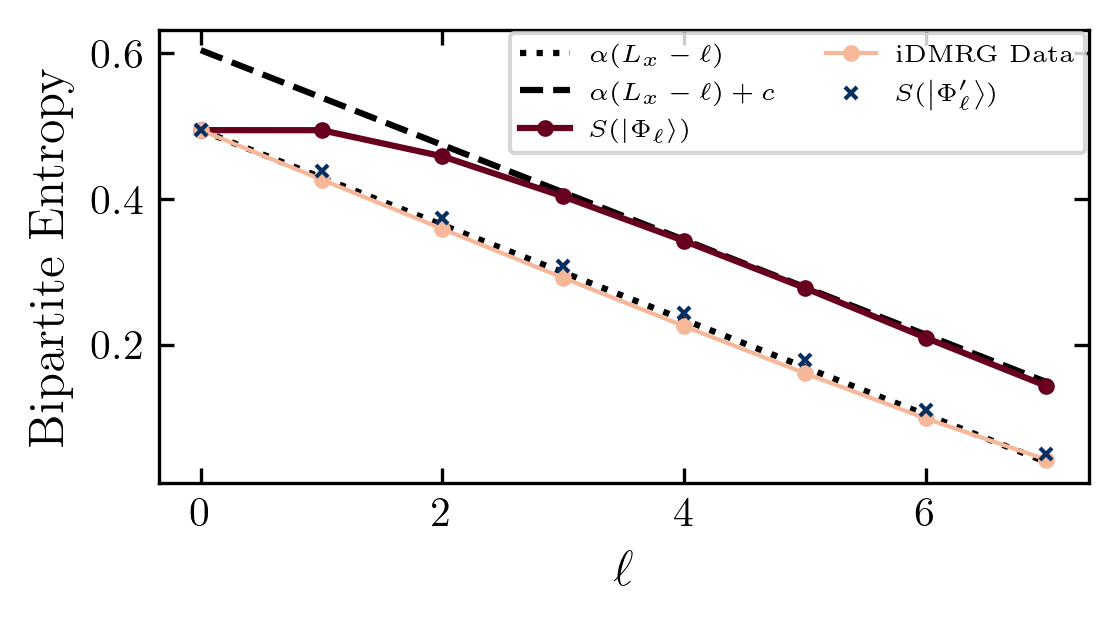}
    \caption{Entropy of $\ket{\Phi_\ell}$ as a function of number of columns $\ell$ removed by peeling the ground state of the 2D TFI Hamiltonian with width $L_x=8$ and $g=3.04438$. 
    After an initial delay, explained by $S(\ket{\Phi_\ell})$ not being a physical entropy, iMM removes an amount of entanglement consistent with the area law. 
    Physical entropy $S(\ket{\Phi_\ell'})$ (as defined in Fig.~\ref{fig:area_law_appendix}(a)) and entropy from iDMRG GS of different width strips agree and obey area law.}
    \label{fig:area_law_g3.04438}
\end{figure}

Next we use the iTEBD\textsuperscript{2} algorithm to search for the ground state of the $g=3.04438$ 2D TFI. 
Again we compare isoTNS energies evaluated by iMM against iDMRG energies for $\chi=512$.
Results for $L_x=4$, $L_x=8$, and $L_x=20$ infinite strips are shown in Fig.~\ref{fig:iTEBD_energy_g3.04438}, where again we find an intermediate $\mathrm{d}t$, which balances iTEBD and iMM errors, leads to the optimal energies.

\begin{figure*}[htb]
    \centering
    \includegraphics[width=\textwidth]{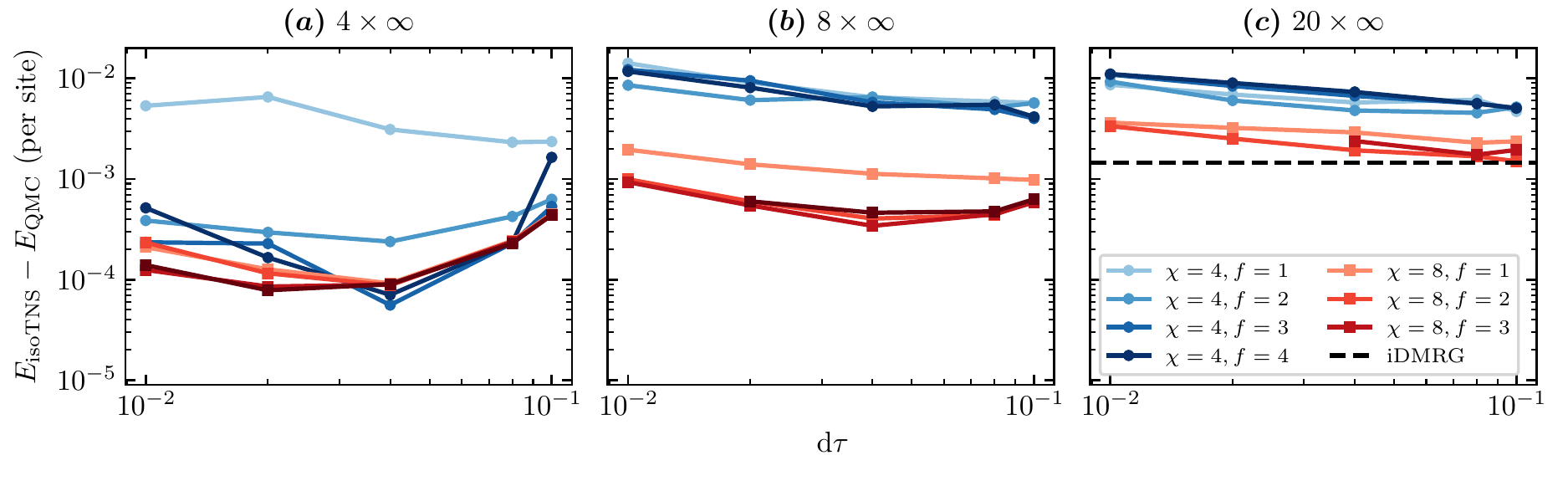}
    \caption{Ground state energies achieved with iTEBD\textsuperscript{2} for critical $g=3.04438$ 2D TFI on $L_x=4, 8, 20$  strip.
    An intermediate $\mathrm{d}\tau$ and $f$ yield the best energy, while $\chi=8$ outperforms $\chi=4$.
    We compare the isoTNS energies against essentially exact energies from quantum Monte Carlo (QMC) extrapolated from strips of finite length.
    As a comparison, the dashed line is the result of an iDMRG calculation with bond dimension $\chi = 512$. For $L_x=4, 8$, the iDMRG result is below the bottom axis of the plot.
    }
    \label{fig:iTEBD_energy_g3.04438}
\end{figure*}

\clearpage
\bibliography{ref}

\end{document}